%% file: main.tex
\documentclass[sigconf]{acmart}

\usepackage{graphicx} 
\usepackage{algorithm}
\usepackage[noend]{algpseudocode}
\usepackage{xspace}
\usepackage{subcaption}
\usepackage{multirow}
\usepackage{doi}

\usepackage{cuted}

\setcopyright{none}
\renewcommand\footnotetextcopyrightpermission[1]{}
\title{Efficient Heuristics and Machine Learning Approach for Fault Characterization in Distributed Self-Stabilizing Programs}

\author{Amit Garu, Duong Nguyen}
\email{agaru@uwyo.edu, duong.nguyen@uwyo.edu}
\affiliation{%
  \institution{University of Wyoming}
  \city{Laramie}
  \state{Wyoming}
  \country{USA}
}

\newcommand{\cvf}{\textit{cvf}\xspace}
\newcommand{\Cvf}{\textit{Cvf}\xspace}

\begin{abstract}
\input{abstract}
\end{abstract}

\keywords{Static Program Analysis, Consistency Fault, Stabilization, State Space Explosion, Distributed Graph Computation, Machine Learning}
  
\begin{document}

\maketitle


\section{Introduction}

\input{introduction}


\section{System Model and Architecture}
\input{sys-model}

\section{Experimental Setting}
\input{experiment-setting}

\section{Results \& Discussions}
\input{result}

\section{Related Works}
\input{related-work}

\section{Conclusion and Future Works}
\input{conclusion}





\bibliographystyle{splncs04}

\bibliography{amit,paper}

\clearpage

\begin{appendix}
\section*{Appendixes}
\input{appendixes}
\end{appendix}

\end{document}

%% file: abstract.tex
Modern large-scale systems rely on distributed protocols to maximize efficiency while preserving the correctness guarantees of single-process execution. However, designing such protocols is non-trivial: adding resources to a system inherently increases its complexity, which in turn introduces faults that must be addressed during design. One such fault class, arising in systems that use distributed shared memory (e.g., replicated databases), is consistency violating fault (\cvf{}), a fault in which a process accessing shared memory reads stale data previously written by another process. \Cvf{}s are inherent to systems that prioritize availability over strict consistency. Prior work has shown that self-stabilizing programs remain correct under high availability, provided such faults stay within a tolerable bound. Characterizing the behavior of self-stabilizing programs in the presence of \cvf{}s can therefore help system designers build systems that are both correct and efficient. In this paper, we study self-stabilizing programs under \cvf{}s to understand \emph{when} and \emph{where} such faults occur, enabling runtime measures that improve computational efficiency. Since exhaustive state-space exploration does not scale, it suffers from state-space explosion; our study investigates two complementary approaches: a conflict-based state-space exploration technique for monotonically stabilizing ($\Delta$ + 1)-Graph Coloring programs, and a machine-learning-based approach for a Maximal Independent Set program. We propose a conflicting-edges-based state-space exploration algorithm alongside a rigorously validated ML model, evaluated on an out-of-distribution dataset through structural consistency checks against known graph invariants.



%% file: introduction.tex
Distributed computation is central to modern systems: as data and problem sizes grow, serial computation becomes non-scalable, motivating the design of algorithms suited to distributed execution. However, designing a distributed algorithm that is a correct counterpart to its serial version is itself a non-trivial task, since distributed systems are inherently complex, difficult to reason about, and susceptible to faults. This study analyzes one such fault, the Consistency Violation Fault (\cvf{})~\cite{Nguyen2019ICDCN,NguyenGuptaKulkarni23ICDCNAnalyzing}, which arises in replicated systems, such as geo-replicated data stores, where a resource's true (current) value is distributed across locations that may not be instantly synchronized. A read is \emph{consistent} if it returns the resource's current value; it constitutes a \cvf{} when a node instead acts on a stale value read from a replica. This study examines \cvf{} through self-stabilizing graph problems, since self-stabilizing programs are designed to tolerate such transient inconsistency.

The effect of a \cvf{} can vary substantially with its location $where$ and timing $when$ in the system. It is natural to expect that faults at more central nodes are more destructive than faults at peripheral ones, though identifying which nodes are central in this sense is itself non-trivial and depends on the specific self-stabilizing problem under study. Cascading \cvf{}s may also have a disproportionately larger effect than isolated faults, and a \cvf{} occurring near the invariant may be more damaging than one occurring far from it, since a configuration close to legitimacy has comparatively little room to absorb a perturbation without moving further away. This study aims to answer several such research questions: given a self-stabilizing problem and an input graph, which nodes are most susceptible to damaging faults, which transitions warrant the most attention, and how self-stabilizing graph problems can be solved more reliably in distributed systems. Answering these questions, however, requires exhaustively enumerating the full configuration space of the system, which quickly becomes computationally intractable as graph size grows.

This intractability is not unique to \cvf{} analysis: exhaustive state space exploration is a well-known bottleneck in the analysis and verification of concurrent systems generally, since the number of reachable states typically grows exponentially with the number of state variables in the system~\cite{valmari1998state}. For example, a self-stabilizing graph coloring algorithm on a 10-node complete graph must explore on the order of $10^9$ states. This is already too many states to store on a modern system, making an exhaustive search infeasible. This is known as the \emph{State Space Explosion} problem, and much of the research in model checking over the past several decades has focused on developing techniques to mitigate it~\cite{pelanek2009fighting,clarke2012model}. 

In our setting, large-scale graph processing is carried out using a \emph{Distributed Shared Memory} (DSM) abstraction, which gives processes on separate machines the illusion of a single, shared address space~\cite{LiHudak89TOCS}. Since the underlying data is not actually co-located, DSM systems need a memory coherence protocol to keep reads and writes consistent across machines, historically implemented via centralized or distributed manager algorithms~\cite{LiHudak89TOCS}. In practice, this abstraction is often implemented atop a replicated storage layer for low-latency access and fault tolerance, which introduces further consistency challenges of its own.

A replicated system maintains multiple copies of shared data across geographically distributed machines, enabling low-latency access and fault tolerance in the presence of network partitions and node failures. This model underlies much of modern distributed infrastructure, from edge computing and compliance-driven architectures to large-scale cloud services; geo-replicated databases such as Amazon's Dynamo, Cassandra, and Google Spanner are prominent industrial examples~\cite{DeCandiaHastorun07, cassandra2009, spanner2013}. A central challenge in these systems is managing the trade-off between consistency, availability, and latency across replicas~\cite{GilbertLynch02Brewer}, which motivates the range of consistency models, from strong consistency guarantees to weaker, eventually consistent semantics, considered in this work.

As data volumes and problem sizes continue to grow, large-scale graph processing has become a cornerstone of modern distributed systems built on this kind of infrastructure. \emph{Distributed Graph Computation} is widely employed for algorithms such as PageRank and Triangle enumeration~\cite{pandurangan2018distributedcomplexitylargescalegraph}, as well as shortest-path, bipartite matching, and clustering algorithms implemented within vertex-centric frameworks such as Pregel~\cite{MalewiczMatthew10ICMD}. Computation over data local to a node is fast and avoids the risk of acting on outdated information, since the data is directly and immediately available. When computation instead depends on data held at other locations, the possibility of \cvf{} arises, since a node may act on a stale copy of remote data before it is updated. Self-stabilizing algorithms are one prominent approach to designing graph computations that remain correct despite such faults, which motivates our focus on \cvf{} analysis in this setting.



%% file: sys-model.tex
\subsection{Self stabilization}
Given $P$ as a distributed program, $S_p$ as the program's state space, $s_i$ as a program state in $P$, and $\delta_p$ as the program's state transition function. Suppose $I \subseteq S_p$ is a set of the program's invariants (legitimate states where the program's specification is met). Then program $P$ is self-stabilizing with respect to $I$ if it satisfies two properties ~\cite{Dijkstra1974CACM}:
\begin{enumerate}
    \item 
    \textbf{Convergence:} from an arbitrary initial state, every computation reaches a state in $I$ within a finite number of steps:\\
    $\forall s \in S_P: \exists s_0 = s, s_1, s_2, ..., s_k \in S_P \land (s_i, s_{i+1}) \in \delta_P \; \forall i \geq 0 \land s_k \in I$
    
    \item \textbf{Closure:} Any program transition from a state inside the invariant will lead to another state also inside the invariant.
    $\forall s \in I \land (s, s') \in \delta_P \Rightarrow s' \in I$.
\end{enumerate}

\subsection{Static Program Analysis}
\input{spa}

\subsection{Case Study of Distributed Programs}
\subsubsection{$(\Delta+1)$-Graph Coloring (GC)}
For CE-based full analysis, we study a monotonically stabilizing $(\Delta+1)$-Graph Coloring algorithm from \cite{HEDETNIEMI2003251}. Each node's state is the numeric color it currently holds, drawn from $\{1, 2, \ldots, \Delta+1\}$, where $\Delta$ is the maximum degree of $G$. A node need not hold the minimum color among its neighbors; correctness only requires that its color differs from every neighbor's color.

\subsubsection{Maximal Independent Set (MIS)}
For ML-based \cvf{} analysis, we study the MIS algorithm from \cite{TURAU200788}. Each node's state takes one of three values, $\{\mathrm{IN}, \mathrm{OUT}, \mathrm{WAIT}\}$. A configuration of $(MIS, G)$ is legitimate when (1) no two neighboring nodes are both labeled $\mathrm{IN}$ (independence), (2) every node labeled $\mathrm{OUT}$ has at least one neighbor labeled $\mathrm{IN}$ (maximality), and (3) no node is labeled $\mathrm{WAIT}$ (all nodes have settled).

%% file: spa.tex


Static program analysis (SPA) is a technique for reasoning about a program's properties by examining its structure and specification directly, without executing it~\cite{GosainSharma15StaticAnalysis}. Because it forgoes execution, static analysis is typically faster and less computationally expensive than dynamic, execution-based approaches. We adopt this technique primarily for the generalization it affords: conclusions drawn from static analysis extend across graphs of varying size and topology, rather than being tied to a specific instance evaluated at runtime.

\textbf{SPA on self-stabilizing algorithms.} Let $P$ denote a self-stabilizing program and $G$ the graph over which $P$ executes, and let $(P, G)$ denote the resulting system under study. We define the \emph{configuration space} of $(P, G)$ as a forest of trees, in which each edge represents a single program transition: the parent node is the source configuration, and each child node is a configuration obtained by applying one step of $P$ to the parent. The leaf nodes of the forest are exactly those configurations satisfying the \texttt{Invariant} predicate, corresponding to the stabilized (legitimate) configurations of $(P, G)$.

\begin{algorithm}[htbp]
\small
\caption{\small Static Program Analysis (computation of Ranks) of any self-stabilizing program using Full Analysis.}
\label{alg:spa-fa-analysis}
\begin{algorithmic}[1]
\Procedure{FullAnalysis}{G}
    \State configurations $\gets$ \Call{GenerateConfigurations}{G}
    \State ranks
    \For{config $\gets$ \Call{GetUnlabeledConfig}{configurations, ranks}}
        \State \Call{DFSUntilInvariant}{config, ranks}
    \EndFor
\EndProcedure
\Statex
\Function{GetUnlabeledConfig}{configurations, ranks}
    \While{config $\in$ configurations}
        \If{config $\notin$ ranks}
            \State \Return config
        \EndIf
    \EndWhile
    \State \Return null
\EndFunction
\Statex
\Procedure{DFSUntilInvariant}{config, ranks}
    \State L $\gets$ 0
    \State C $\gets$ 1
    \State M $\gets$ 0
    \While{config$\rq$ $\in$ \Call{GetProgramTransitions}{config}}
        \State L$\rq$, C$\rq$, M$\rq$ $\gets$ \Call{DFSUntilInvariant}{config$\rq$}
        \State L $\gets$ L + L$\rq$ + C$\rq$
        \State C $\gets$ C + C$\rq$
        \State M $\gets$ max(M, M$\rq$)
    \EndWhile
    \State ranks[config] $\gets \frac{\textnormal{L}}{\textnormal{C}}$
    \State \Return L, C, M
\EndProcedure
\end{algorithmic}
\end{algorithm}


\subsection{Exhaustive Full Analysis (FA)}
\label{sec:spa-fa}

Exhaustive FA is an SPA technique in which the entire configuration space of a program is exhaustively studied. For instance, $(\Delta+1)$-coloring on a star graph with 4 nodes yields $4 \cdot 2 \cdot 2 \cdot 2 = 32$ configurations, all of which are examined under FA. Algorithm~\ref{alg:spa-fa-analysis} presents a memory-efficient method for traversing the entire configuration space using Depth-First Search (DFS). We now introduce the following terminology used throughout our FA-based analysis:


\begin{definition}[Rank of the Configuration]
\label{def:rank}
The rank of a configuration in $(P, G)$ is the number of correct program transitions required for that configuration to reach a legitimate state.
\end{definition}

\begin{definition}[Max Rank and Average Rank]
\label{def:max-avg-rank}
The \emph{Max Rank} of a configuration, denoted $\mathrm{Rank}_{\max}$, is the maximum rank (Definition~\ref{def:rank}) observed across all possible transition sequences from that configuration to legitimate states of $(P, G)$. The \emph{Average Rank}, denoted $\mathrm{Rank}_{\mathrm{avg}}$, is the mean rank observed across the same set of transition sequences.
\end{definition}

\begin{figure}[htbp]
    \centering
    \includegraphics[width=\linewidth]{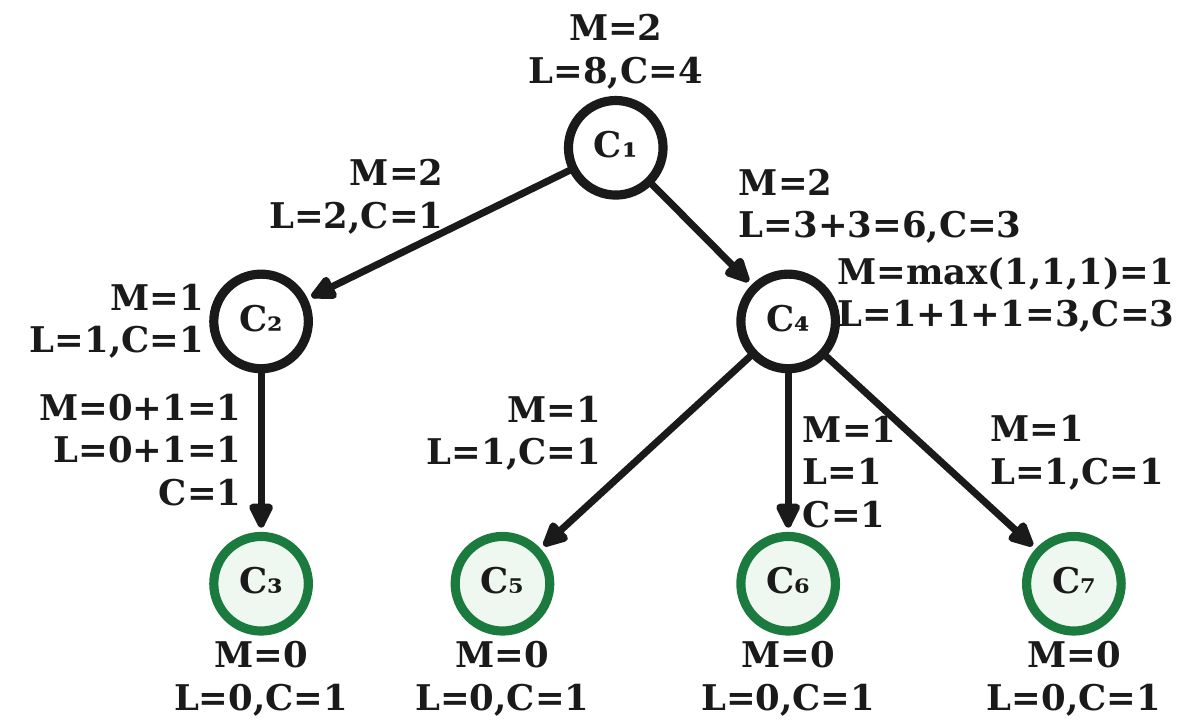}
    \caption{\small Max rank (M) and Average rank ($\frac{\textbf{L}}{\textbf{C}}$) calculations, starting from Invariant (M=0, L=0, C=1) and extending to the configuration at the root (M=2, L=8, C=4). \{C$_3$, C$_5$, C$_6$, C$_7$\} are the Invariant configurations.}
    \label{fig:rank-calc}
\end{figure}

Figure~\ref{fig:rank-calc} illustrates the rank calculation for the root configuration and the intermediate configurations that lead from the root configuration to the legitimate states (i.e., the stabilization program transitions). Here, $M$, $L$, and $C$ denote the Max Rank, Total Path Length, and Total Path Count, respectively, for any given configuration. The Average Rank is the ratio of $L$ to $C$. In the sections that follow, we focus our study on Average Rank, as it is the more complex of the two measures and the analysis applies equally to Max Rank.

\textbf{Rank Effect (RE).} RE quantifies the impact of a state transition in $(P, G)$ that results from reading stale values from neighboring nodes. Within our \cvf{} analysis model, RE is defined as the difference in rank between the source configuration $\gamma$ and the perturbed configuration $\gamma'$:
\[
RE(\gamma, \gamma \rq) = \text{rank}(\gamma \rq) - \text{rank}(\gamma)
\]
Intuitively, RE captures how many extra (or fewer) steps the perturbed state needs to reach the invariant, relative to the number of steps required under the correct (unperturbed) state transition.

\subsection{Conflicting Edges (CE) Method}
\label{sec:ce-method}
\begin{algorithm}[htbp]
\small
\caption{\small Static program analysis (rank computation) of a monotonically stabilizing program using the CE method.}
\label{alg:spa-ce-fa-analysis}
\begin{algorithmic}[1]
\Procedure{ConflictingEdgesBasedFullAnalysis}{G}
    \State configurations $\gets$ \Call{GenerateConfigurations}{G}
    \State ranks $\gets$ \Call{InitRanks}{configurations}
    \For{config $\gets$ \Call{GetUnrankedConfig}{configurations, ranks}}
        \State state $\gets$ \Call{GetEdgeState}{config} \label{line:ce-edge-state}
        \State state $\gets$ \Call{FilterNonConflictingEdges}{state} \label{line:ce-filter}
        \State \Call{RelabelState}{state} \label{line:ce-relabel-root}
        \State L, C $\gets$ \Call{TraverseCEState}{state}
        \State ranks[config] $\gets$ $\frac{\textnormal{L}}{\textnormal{C}}$ \label{line:ce-rank}
    \EndFor
\EndProcedure
\Statex
\Function{TraverseCEState}{state}
    \If{state $= \emptyset$} \label{line:ce-base}
        \State \Return (0, 1)
    \EndIf
    \If{state $\in$ CACHE} \label{line:ce-cache-lookup}
        \State \Return CACHE[state]
    \EndIf
    \State L $\gets$ 0
    \State C $\gets$ 0
    \For{node $\in$ \Call{GetUniqueNodeIds}{state}}
        \If{\Call{NodeIsEligible}{state, node}} \label{line:ce-eligible}
            \State state$\rq$ $\gets$ state $\setminus \{e \in$ state $\mid$ node $\in e\}$ \label{line:ce-branch} \Comment{Branching: eliminate edges incident to node.}
            \State \Call{RelabelState}{state$\rq$} \label{line:ce-relabel} \Comment{Relabeling of nodes.}
            \State L$\rq$, C$\rq$ $\gets$ \Call{TraverseCEState}{state$\rq$}
            \State L $\gets$ L + L$\rq$ + C$\rq$ \label{line:ce-accumulate}
            \State C $\gets$ C + C$\rq$
        \EndIf
    \EndFor
    \State CACHE[state] $\gets$ (L, C) \label{line:ce-cache-store}
    \State \Return (L, C)
\EndFunction
\Statex
\Procedure{RelabelState}{state}
    \State nodes $\gets$ \Call{GetUniqueNodeIds}{state}
    \State \Call{Sort}{nodes}
    \State newNodeIds $\gets$ \Call{MapNodeIdsSerially}{nodes}
    \For{edge $\in$ state}
        \State edge[0] $\gets$ newNodeIds[edge[0]]
        \State edge[1] $\gets$ newNodeIds[edge[1]]
    \EndFor
\EndProcedure
\end{algorithmic}
\end{algorithm}

Unlike FA, which exhaustively explores the configuration space, CE computes configuration ranks by working directly on conflicting edges (Algorithm~\ref{alg:spa-ce-fa-analysis}). CE exploits a structural symmetry in the state space of monotonically stabilizing programs (Definition~\ref{def:monotonic-stabilization}), of which GC is an instance. In such programs, a node needs only a single correct move to stabilize its state, after which it is never enabled again. For each configuration, CE therefore extracts its Edge State (Definition~\ref{def:invariant-edge-state}) and retains only the conflicting edges (Lines~\ref{line:ce-edge-state}--\ref{line:ce-filter}). \textproc{TraverseCEState} then explores the possible moves recursively: for each eligible node, it branches by eliminating all conflicting edges incident to that node, since the node's single correct move resolves them permanently (Branching and Elimination, Definition~\ref{def:branch-elim}; Lines~\ref{line:ce-eligible}--\ref{line:ce-branch}). The recursion ends when no conflicting edges remain (Line~\ref{line:ce-base}). Along the way, CE accumulates the number $C$ of stabilizing paths and their total length $L$, adding one move to every path through each branch (Line~\ref{line:ce-accumulate}), and the rank of the configuration is $L/C$ (Line~\ref{line:ce-rank}).

CE applies \textproc{RelabelState} (Definition~\ref{def:relabeling}) to every state it visits (Lines~\ref{line:ce-relabel-root} and~\ref{line:ce-relabel}), mapping the remaining node identifiers, in sorted order, to consecutive integers. Moves at different nodes can then lead to states whose remaining conflicting edges are identical after relabeling. In CE, the rank of a graph is determined by its structure and its Edge State rather than by its particular node values: every assignment of node values that satisfies the same Edge State yields the same rank. Such states therefore share a single CACHE entry, so CE computes their rank once and reuses it rather than recomputing it (Lines~\ref{line:ce-cache-lookup} and~\ref{line:ce-cache-store}).

CE does not, however, detect every pair of graphs with isomorphic structure. Because \textproc{RelabelState} only compacts node identifiers while preserving their order, it identifies only those pairs that become identical after relabeling; pairs that are isomorphic under any other correspondence of node identifiers receive separate CACHE entries. Detecting all such pairs would require an isomorphism test at every step. The Weisfeiler-Leman (WL) test~\cite{weisfeiler1968reduction} alone is unsuitable, because it fails to distinguish certain non-isomorphic graphs~\cite{cai1992optimal} and could therefore merge graphs with different ranks. Exact canonical labeling~\cite{mckay2014practical} avoids this, but at the cost of computing a canonical form for each graph. Exploring this trade-off is a natural direction for further optimizing CE.

The remaining functions in Algorithm~\ref{alg:spa-ce-fa-analysis} are defined as follows. \textproc{GenerateConfigurations} enumerates every configuration of $G$, i.e., every assignment of domain values to its nodes, and \textproc{InitRanks} creates a rank table in which every configuration is initially unranked. \textproc{GetUnrankedConfig} returns the next configuration whose rank has not yet been computed. \textproc{GetEdgeState} computes the Edge State of a configuration (Definition~\ref{def:invariant-edge-state}), marking each edge as conflicting or non-conflicting, and \textproc{FilterNonConflictingEdges} removes the non-conflicting edges, so that a state is represented by its set of conflicting edges. \textproc{GetUniqueNodeIds} returns the identifiers of all nodes that are endpoints of at least one edge in the state, and \textproc{NodeIsEligible} returns whether a node is enabled in the state; in GC, this holds for every endpoint of a conflicting edge. Within \textproc{RelabelState}, \textproc{Sort} and \textproc{MapNodeIdsSerially} implement this mapping, assigning the $i$-th identifier in ascending order to $i$, starting from $0$. Finally, CACHE is a global map from relabeled states to their computed pairs $(L, C)$, shared across all configurations.

A detailed workflow example is provided in the appendix (Figure~\ref{fig:ce-alg-execution}). Compared to exhaustive FA (Section~\ref{sec:spa-fa}), CE is substantially more memory-efficient and modestly faster, as demonstrated in Section~\ref{sec:results-ce-fa}.

\subsubsection{Proof of Correctness}
\begin{definition}[Monotonic Stabilization]
\label{def:monotonic-stabilization}
A self-stabilizing system is \emph{monotonically stabilizing} if, following a transient fault, every process changes its output at most once before the system reaches a legitimate state~\cite{yamauchi2010monotonic, faghih2018automated}. Equivalently, recovery proceeds without oscillation: each process moves directly from its post-fault output to its final, legitimate output, and the system as a whole progresses monotonically closer to a legitimate state at every step.
\end{definition}

\begin{definition}[Edge State and Invariant Edge State]
\label{def:invariant-edge-state}
Let $c$ be a configuration of $(\mathrm{GC}, G)$ with $G=(V,E)$, and let $c(v)$
denote the color of node $v$ in $c$. An edge $(u,v) \in E$ is
\emph{conflicting} in $c$ if $c(u) = c(v)$. The \emph{edge state} of $c$ is
the set of its conflicting edges,
\[
    s(c) = \{(u,v) \in E : c(u) = c(v)\} \subseteq E.
\]
An edge state $s$ is an \emph{invariant edge state} if $s = \emptyset$, i.e.,
no conflicting edges remain. Since every configuration with no conflicting
edges is legitimate, an invariant edge state has rank $0$ by
Definition~\ref{def:rank}. For an edge state $s$, the \emph{total path length} $L(s)$ and \emph{total path count} $C(s)$ are the sum of the lengths and the number of all transition sequences (paths) from $s$ to $\emptyset$.
\end{definition}

\begin{definition}[Branching and Elimination of Eligible Nodes]
\label{def:branch-elim}
Given an edge state $s$ with an eligible node $N_v$ (i.e., a node with at least one enabled action), the algorithm \emph{branches} by generating the successor state yielded by executing that action at $N_v$. By Definition~\ref{def:monotonic-stabilization}, this successor state \emph{eliminates} the edges associated with $N_v$, since a monotonically stabilizing algorithm stabilizes the node with at most one action on it.
\end{definition}

\begin{definition}[Relabeling of the Subgraph]
\label{def:relabeling}
Given an edge state $s'$, the relabeling of the subgraph relabels the node ids in serial order such that the existing node ids are as close as possible to the relabeled node ids in the subgraph. Since the subgraphs are structurally identical, the Rank (Definition~\ref{def:rank}) of the original subgraph and that of the relabeled subgraph are the same.
\end{definition}


\begin{lemma}[Correctness of CE-based Rank Computation]
\label{lem:ce-correctness}
For any edge state $s$ in $(P, G)$, the CE-based method correctly computes the Total Path Length $L$ and Total Path Count $C$ of $s$.
\end{lemma}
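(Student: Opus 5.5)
The plan is a strong induction on $|s|$, the number of conflicting edges in the edge state $s$. First I would fix the meaning of $L(s)$ and $C(s)$ as in Definition~\ref{def:invariant-edge-state}. A path from $s$ to $\emptyset$ is a sequence of moves by eligible nodes. Under Definition~\ref{def:branch-elim}, each move at node $v$ takes $s$ to $s_v = s\setminus\{e\in s \mid v\in e\}$. I would first check that this is a strict decrease. Since \textproc{NodeIsEligible} holds in GC for every endpoint of a conflicting edge, $v$ has at least one incident edge in $s$, so $|s_v|<|s|$. Hence every path is finite and the recursion terminates.

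Next I would establish the recurrences that the correct values must satisfy. Every path from a nonempty $s$ begins with a unique first move at some eligible $v$, followed by a path from $s_v$. This gives a bijection between paths from $s$ and the disjoint union over eligible $v$ of paths from $s_v$. Each prefixed path is one step longer than its suffix. Therefore
\[
C(s)=\sum_{v} C(s_v), \qquad L(s)=\sum_{v}\bigl(L(s_v)+C(s_v)\bigr),
\]
where both sums range over the eligible nodes of $s$. The base case is $s=\emptyset$: there is exactly one path, the empty one, of length $0$, so $(L,C)=(0,1)$. This matches Line~\ref{line:ce-base}.

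The inductive step compares these recurrences with \textproc{TraverseCEState}. It initializes $L=C=0$ and iterates over \textproc{GetUniqueNodeIds}, which lists exactly the endpoints of edges in $s$, each once. It filters by eligibility and accumulates as in Line~\ref{line:ce-accumulate}. This reproduces the recurrence term for term, provided each recursive call returns $(L(s_v),C(s_v))$. That holds by the induction hypothesis, because $|s_v|<|s|$.

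Two points remain, and I expect them to be the main obstacle. \textbf{Relabeling.} I would show that \textproc{RelabelState} is an order-preserving bijection $\pi$ on node identifiers, and hence a graph isomorphism of the edge state. Eligibility and branching commute with it: $\pi(s)_{\pi(v)}=\pi(s_v)$. Paths therefore correspond bijectively and preserve length, so $L$ and $C$ are invariant, as asserted in Definition~\ref{def:relabeling}. For GC I would also check that $s_v$ depends only on $s$ and $v$ and not on the underlying colors. Monotonic stabilization (Definition~\ref{def:monotonic-stabilization}) gives that $v$'s single move permanently resolves its incident conflicts, so that move creates no new conflicting edges. This is the modeling assumption the lemma really rests on, and I would state it explicitly rather than prove it from the coloring rule. \textbf{Caching.} CACHE is a global map keyed by relabeled states. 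By induction on the order of insertion, every stored entry equals the true $(L,C)$ of its key, since it was computed by a correct call. A lookup at Line~\ref{line:ce-cache-lookup} therefore returns correct values for any state with equal key, and such states have equal $L,C$ by the relabeling invariance. This closes the induction and proves the lemma.
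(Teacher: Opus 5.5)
Your proposal is correct and follows the paper's own argument. Both use induction with base case $s=\emptyset$ giving $(0,1)$, branching at each eligible node to a successor with strictly fewer conflicting edges, relabeling that leaves $L$ and $C$ unchanged, and aggregation over the successors. You also make explicit several points the paper leaves implicit: the cleaner induction measure $|s|$, the recurrences $C(s)=\sum_v C(s_v)$ and $L(s)=\sum_v \bigl(L(s_v)+C(s_v)\bigr)$ obtained from the first-move bijection, the correctness of CACHE lookups, and the no-new-conflicts assumption on which the branching rule rests.
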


\begin{proof}
We proceed by induction on the number of transitions from $s$ to an invariant edge state (Definition~\ref{def:invariant-edge-state}). In the base case, $s$ is itself an invariant edge state, i.e., $s = \emptyset$, for which $L = 0$ and $C = 1$ by definition. For the inductive step, assume the claim holds for all edge states reachable from $s$ in fewer transitions than $s$ requires. The CE-based method comprises two steps, applied in order: \emph{Branching and Elimination of Eligible Nodes} (Definition~\ref{def:branch-elim}) and \emph{Relabeling of the Subgraph} (Definition~\ref{def:relabeling}). For each eligible node in $s$ (i.e., a node incident to one or more conflicting edges), branching and elimination correctly model the state transition of $s$ when a move is made by that eligible node, producing one successor per eligible node, each with strictly fewer conflicting edges than $s$. Relabeling, applied on top of each successor, introduces no side effects and thus allows the rank to be computed correctly for $s$ without altering the correctness established by the inductive hypothesis. Since $L$ and $C$ for $s$ are computed by aggregating over these successors, the claim holds for $s$, completing the induction.
\end{proof}




\subsection{Machine Learning (ML) Method}
\label{sec:ml-method}
\begin{algorithm}[htbp]
\small
\caption{\small ML-based \cvf{} analysis across rank and centrality attributes (degree, clustering, closeness, betweenness), stratified into high- and low-valued regimes. $S$ is the number of sampled configurations per regime; $C$ is the number of perturbed configurations sampled per configuration in $S$.}
\label{alg:ml-based-cvf-analysis}
\begin{algorithmic}[1]
\Procedure{MLBasedCVFAnalysis}{MLModel, Attribute, S, C}
    \State HiValConfigs, LoValConfigs $\gets$ \Call{GetHiLoValConfigs}{Attribute, S}
    \State HiValPertConfigs $\gets$ \Call{GetPerturbedConfigs}{HiValConfigs, C}
    \State LoValPertConfigs $\gets$ \Call{GetPerturbedConfigs}{LoValConfigs, C}
    \State HiREDist $\gets$ $\emptyset$
    \State LoREDist $\gets$ $\emptyset$
    \State \Call{ComputeRE}{MLModel, HiValConfigs, HiValPertConfigs, HiREDist}
    \State \Call{ComputeRE}{MLModel, LoValConfigs, LoValPertConfigs, LoREDist}
\EndProcedure
\Procedure{ComputeRE}{MLModel, Configs, PertConfigs, Dist}
    \For{config $\in$ Configs}
        \State rank $\gets$ MLModel(config)
        \For{config$\rq$ $\in$ PertConfigs}
            \State rank$\rq$ $\gets$ MLModel(config$\rq$)
            \State Dist $\gets$ Dist $\cup$ (rank$\rq$ - rank)
        \EndFor
    \EndFor
\EndProcedure
\end{algorithmic}
\end{algorithm}

An efficient way of computing the rank of a configuration in $(P, G)$ without exhaustively computing all the ranks is to use ML methods to learn the pattern of configurations in $(P, G)$ to get their ranks. In our study, we design a black-box ML model that takes as input a configuration alongside other precomputed attributes to predict the rank of the configuration. Since training data (which requires true rank labels) can only feasibly be generated for smaller graphs, it is important to consider how the model generalizes to the larger graphs used at inference time. Once we can compute rank efficiently for any configuration, we can analyze results based on sampled configurations in the configuration space. We leverage this ML-based approach to analyze faults in terms of \textit{when} and \textit{where} they occur, by examining sampled configurations in high- and low-valued regimes of each attribute. Algorithm~\ref{alg:ml-based-cvf-analysis} illustrates our ML-based approach for \cvf{} analysis across the evaluation attributes. \Call{GetHiLoValConfigs}{} first identifies the high- and low-valued regimes for a given attribute, then samples $S$ configurations from each regime.



%% file: experiment-setting.tex
\subsection{Graph Datasets}
\textbf{Graph families.} We evaluate on four structurally distinct graph
families, common to both the FA/CE experiments and the ML-based method:
complete graphs (\emph{K}), star graphs (\emph{St}), powerlaw cluster
graphs (\emph{Pl}), and path graphs (\emph{Pa}). This diversity lets us
assess whether the learned rank predictor generalizes across topologies
with markedly different degree distributions and diameters, rather than
overfitting to a single structural regime.

\textbf{In-Distribution (ID) and Out-of-Distribution (OOD) datasets.}
For the ML-based method specifically, because FA suffers from state-space
explosion, exhaustive enumeration is intractable beyond small graphs. We
therefore train the model on an ID dataset restricted to graphs within a
limited node-count range, and evaluate extrapolation on an OOD dataset
consisting of larger graphs whose node counts were never seen during
training. The OOD set serves as our validation set, and we track its
metrics alongside the ID training metrics throughout training to monitor
generalization rather than memorization.

\textbf{ML Features.} Also specific to the ML-based method, features used
to train the rank predictor follow a three-level taxonomy, from granular
to coarse. \emph{Node features} are local to a single node in a given
configuration: structural properties (degree, clustering, closeness,
betweenness centrality) and program state (e.g., \emph{is\_in},
\emph{can\_request} for MIS). \emph{Configuration features} aggregate
state across all nodes at a fixed configuration (e.g.,
\emph{multi\_gh\_max\_norm}, \emph{eligible\_frac}). \emph{Graph features}
are global, configuration-independent properties of $(P,G)$ (e.g., the
adjacency matrix), fixed regardless of system state.

\subsection{Measurement Protocol}
The following measurement protocol applies specifically to the FA and CE
experiments.

\begin{definition}[Comprehensive Phase]
\label{def:comprehensive-phase}
For a program-graph pair $(P, G)$, the \emph{comprehensive phase} of rank
computation, $T_{\text{comp}}(P,G)$, denotes the total wall-clock cost of
(i) allocating memory to store the rank of every configuration in the
configuration space of $(P,G)$, (ii) computing the rank of each
configuration, and (iii) writing each computed rank into its allocated
location.
\end{definition}

\begin{definition}[Core Phase]
\label{def:core-phase}
The \emph{core phase}, $T_{\text{core}}(P, G) \subseteq T_{\text{comp}}(P,G)$,
denotes the cost of step (ii) alone, isolated from allocation and storage
overhead. Under \emph{exhaustive full analysis (FA)}, $T_{\text{core}}(P,G)$
is the cost of a depth-first traversal of the full configuration space of
$(P,G)$. Under \emph{CE-based full analysis (CE)}, $T_{\text{core}}(P,G)$
is the cost of enumerating all conflicting-edge states reachable in
$(P,G)$.
\end{definition}










\subsection{ML Model Architecture}
Graph Neural Networks (GNNs) are the standard model architecture for learning over graph-structured data, following a general \emph{message-passing} scheme: at each layer, every node aggregates information from its neighbors, then combines this aggregated message with its own previous representation to produce an updated representation. Stacking $L$ such layers allows a node to incorporate information from nodes up to $L$ hops away. Different GNN variants are distinguished primarily by their choice of aggregation and combination functions, which in turn determines their expressive power. In recent years, several such architectures have emerged, including Graph Convolutional Networks (GCN)~\cite{kipf2017semisupervisedclassificationgraphconvolutional}, GraphSAGE~\cite{hamilton2018inductiverepresentationlearninglarge}, Graph Attention Networks (GAT)~\cite{veličković2018graphattentionnetworks}, and Graph Isomorphism Networks (GIN)~\cite{xu2018how}.

For our study, we consider GCN- and GIN-based networks: GIN as our primary architecture for its expressivity, since it is provably as powerful as the Weisfeiler-Lehman (WL) test~\cite{weisfeiler1968reduction} at distinguishing non-isomorphic graphs~\cite{xu2018how}, and GCN (following the standard formulation of~\cite{kipf2017semisupervisedclassificationgraphconvolutional}) as a point of comparison in our architecture ablation (Section~\ref{sec:ablation-ml-mis}), where its mean-based aggregation, not injective, and therefore strictly less expressive, provides a natural lower bound on model expressivity. We describe our primary GIN formulation below, where $h_v^{(l)}$ denotes the representation of node $v$ at layer $l$ and $\mathcal{N}(v)$ its neighborhood.


\noindent\textbf{GIN}~\cite{xu2018how} updates each node representation via a learned, injective aggregation:
\begin{equation}
    h_v^{(l)} = \mathrm{MLP}^{(l)}\left( (1+\epsilon^{(l)}) \, h_v^{(l-1)} + \sum_{u \in \mathcal{N}(v)} h_u^{(l-1)} \right)
\end{equation}
where $\epsilon^{(l)}$ is a learnable (or fixed) scalar and $\mathrm{MLP}^{(l)}$ is a multi-layer perceptron. Because summation over neighbors, combined with an MLP, can approximate any injective function on multisets, GIN can, in principle, distinguish any pair of graphs that the WL test can distinguish.

\subsection{ML Training and Evaluation Setup}

\begin{table}[htbp]
\centering
\small
\setlength{\tabcolsep}{6pt}
\caption{Rank distribution by split, training summary, and hardware. $n$ is the number of configurations employed for training/validation.}
\label{tab:rank-dist}
\begin{tabular*}{\linewidth}{@{\extracolsep{\fill}} lrrrr}
\toprule
\textbf{Split} & $n$ & \textbf{Mean} & \textbf{Median} & \textbf{Max} \\
\midrule
ID Train      & 36{,}015 & 5.65  & 5.00  & 16.33 \\
ID Validation & 6{,}355  & 5.65  & 5.00  & 15.71 \\
OOD Validation & 39{,}600 & 12.60 & 12.51 & 30.02 \\
\midrule
\multicolumn{5}{c}{Model Parameters: 34{,}499 \quad Epochs: 50 \quad Train time: 184s } \\
\multicolumn{5}{c}{Loss Function: Mean Absolute Error (MAE)} \\
\multicolumn{5}{c}{GPU: RTX 4090 \quad CPU: Intel i7 (13th Gen) \quad RAM: 128\,GB} \\
\bottomrule
\end{tabular*}
\end{table}

Table~\ref{tab:rank-dist} presents the training setup for our ML model designed for predicting rank for a given configuration in $(MIS, G)$. ID graphs (train/validation) range from 3 to 8 nodes, while the OOD split ranges from 9 to 13 nodes; mean and median rank roughly double between the two, reflecting this size shift. Despite the modest training footprint, the model is evaluated against an OOD state space several orders of magnitude larger than what FA can exhaustively enumerate at comparable graph sizes, underscoring the scalability motivation for the ML-based approach over exhaustive analysis.

We evaluate the trained ML model along two complementary axes. The first is rank-based \cvf{} analysis (the \emph{when} aspect), in which high- and low-rank configurations are sampled from the configuration space as described in Section~\ref{sec:rank-sampling}; the rank effect distribution is then generated by applying a limited number of single-step perturbations to each sampled configuration and computing the resulting rank effect. The second is node-based \cvf{} analysis (the \emph{where} aspect), in which we use four centrality measures: degree, clustering, closeness, and betweenness, described in Section~\ref{sec:centrality-measures}, and sample high- and low-valued nodes for each; for a given source configuration, we perturb the sampled node and compute the resulting rank effect distribution.

\subsubsection{Sampling Low- and High-Rank Configurations}
\label{sec:rank-sampling}

To study how RE (Section~\ref{sec:spa-fa}) varies across the configuration space, we construct two contrasting pools of configurations for a given graph $G$. The \emph{low-rank} pool is built systematically: for every invariant configuration $\gamma \in I(P,G)$, we enumerate all single-node perturbations of $\gamma$ (perturbing exactly one node), and retain a perturbed configuration $\gamma'$ only if it is not itself invariant but has at least one enabled action transitioning directly back into $I(P,G)$, guaranteeing a rank of $\approx 1$.

The \emph{high-rank} pool is instead hand-constructed to cover several qualitatively distinct failure patterns: uniform extremes (all OUT, all IN, all WAIT), high concentrations of simultaneous WAIT nodes, dense IN--IN conflicts, alternating IN/OUT patterns, hub-concentrated messes, and random mixes weighted toward IN and WAIT states.

\subsubsection{Centrality Measures}
\label{sec:centrality-measures}
For node-based \cvf{} analysis, we characterize each node $v$ in a graph $G=(V,E)$, $n=|V|$, using four standard centrality measures.

\noindent\textbf{Degree centrality} measures a node's direct connectivity:
\begin{equation}
    C_D(v) = \frac{\deg(v)}{n-1}
\end{equation}
where $\deg(v)$ is the number of neighbors of $v$.

\noindent\textbf{Clustering coefficient} measures how interconnected a node's neighbors are:
\begin{equation}
    C_{cl}(v) = \frac{2 e_v}{k_v (k_v - 1)}
\end{equation}
where $k_v = \deg(v)$ and $e_v$ is the number of edges among $v$'s neighbors; $C_{cl}(v) = 0$ if $k_v < 2$.

\noindent\textbf{Closeness centrality} measures how close a node is, on average, to all other nodes:
\begin{equation}
    C_C(v) = \frac{n-1}{\sum_{u \in V \setminus \{v\}} d(v,u)}
\end{equation}
where $d(v,u)$ is the shortest-path distance between $v$ and $u$.

\noindent\textbf{Betweenness centrality} measures how often a node lies on shortest paths between other node pairs:
\begin{equation}
    C_B(v) = \sum_{\substack{s,t \in V \setminus \{v\} \\ s \neq t}} \frac{\sigma_{st}(v)}{\sigma_{st}}
\end{equation}
where $\sigma_{st}$ is the number of shortest paths between $s$ and $t$, and $\sigma_{st}(v)$ is the number of those passing through $v$.

For each measure, we identify the highest- and lowest-valued nodes as the high- and low-valued node samples used in node-based \cvf{} analysis (Section~\ref{sec:spa-fa}).

\subsection{Graph Size-Invariant Design for OOD Generalization}

\subsubsection{Multi Greedy Height}
\label{sec:multi-greedy-height}
The configuration features prefixed by \emph{multi\_gh} estimate a configuration's distance from the invariant, analogous to the rank function itself: zero at the invariant, and decreasing along legal transitions toward it. As a proxy for this distance, we use the \emph{height} of a greedy rollout to an invariant state, i.e.\ its path length, computed over 16 such rollouts (two deterministic, two greedy on \emph{violation\_score}/\emph{privilege\_count}, and twelve random). \emph{multi\_gh\_mean\_norm} and \emph{multi\_gh\_max\_norm} are the mean and maximum height over these 16 rollouts, respectively, each normalized by the number of nodes in the graph; the \emph{unnormalized} maximum height additionally serves as the Lyapunov base for the model's residual prediction target (\texttt{lyap\_base=max}; Section~\ref{sec:scaled-residual}). Two of the greedy trials probe different notions of ``distance from done'': one maximizes \emph{privilege\_count}, the number of nodes with a currently enabled move (zero only at the invariant), while the other maximizes \emph{violation\_score}, a count of adjacent IN--IN conflicts, WAIT nodes, and uncovered OUT nodes, together steering rollouts toward busier or more structurally unresolved configurations, respectively.

\subsubsection{Scaled Residual}
\label{sec:scaled-residual}

Rather than predicting rank directly, the model predicts a \emph{scaled residual} below a Lyapunov upper bound, computed as follows.

\noindent \textbf{Lyapunov base} $\mathrm{lyap}$ (\texttt{lyap\_base=max}): the raw maximum greedy height to an invariant configuration (Section~\ref{sec:multi-greedy-height}), i.e.\ \emph{multi\_gh\_max\_norm} before normalization by $n$.

\noindent \textbf{Scale}: $\mathrm{scale} = \max(\mathrm{diameter}, n-1, 1)$, where $\mathrm{diameter}$ is the diameter of the graph and $n$ is the number of nodes, so larger graphs receive a proportionally larger denominator.

\noindent \textbf{Training target}:
\begin{equation}
    y = \frac{\mathrm{lyap} - \mathrm{rank}}{\mathrm{scale}}
\end{equation}
The gap between the Lyapunov bound and the true rank, expressed in units of graph scale rather than absolute rank.

\noindent \textbf{Decoding at inference}:
\begin{equation}
    \hat{\mathrm{rank}} = \mathrm{clamp}\left(\mathrm{lyap} - \hat{y} \times \mathrm{scale},\ 0\right)
\end{equation}
Since $\mathrm{rank}$ (Definition~\ref{def:rank}) can never be negative, $\mathrm{clamp}(x, 0) = \max(x, 0)$ floors the raw decoded value at 0, guaranteeing a valid rank even when prediction error would otherwise yield a negative result.

This scaling is what allows the model to share weights across graph sizes: under a raw residual ($y = \mathrm{lyap} - \mathrm{rank}$), the gap grows directly with $n$, making a fixed prediction error correspond to very different relative errors on small versus large graphs. The scaled residual instead expresses the gap as a size-relative quantity. For example, a residual near 1 means the gap is about as large as the graph's diameter, whether $n=5$ or $n=12$.

%% file: result.tex
\subsection{CE-based  analysis on ($\Delta$ + 1)-GC}
\label{sec:results-ce-fa}
\begin{figure}[htbp] 
    \centering
    \begin{subfigure}[b]{\linewidth}
        \includegraphics[width=\linewidth]{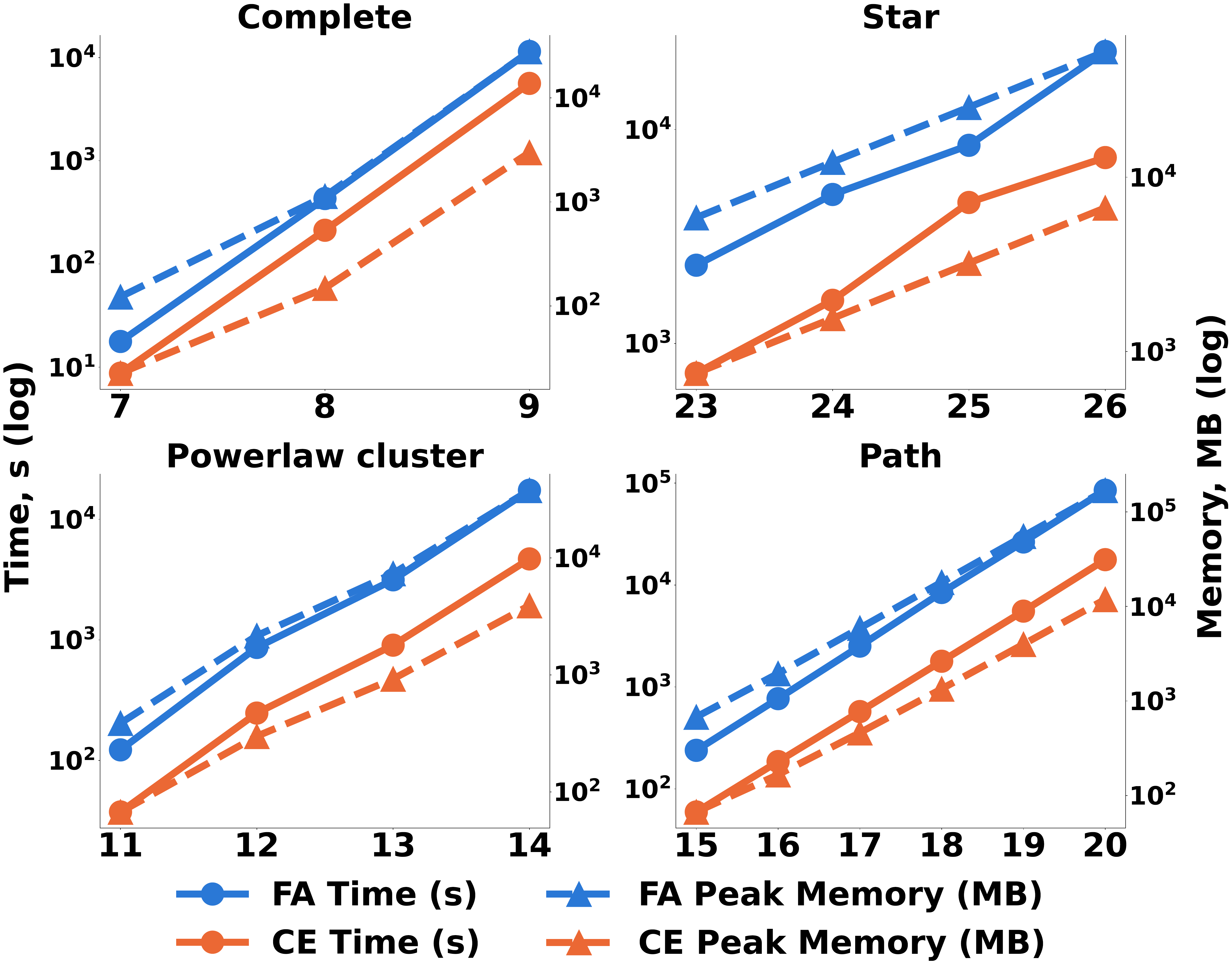}
        \caption{\small Comparison of total time (seconds) and peak memory (MB) between FA and CE for rank (Comprehensive phase from definition~\ref{def:comprehensive-phase}) computation, shown across four graph families.}
        \label{fig:placeholder1}
    \end{subfigure}
    \newline \newline
    \begin{subfigure}[b]{\linewidth}
        \includegraphics[width=\linewidth]{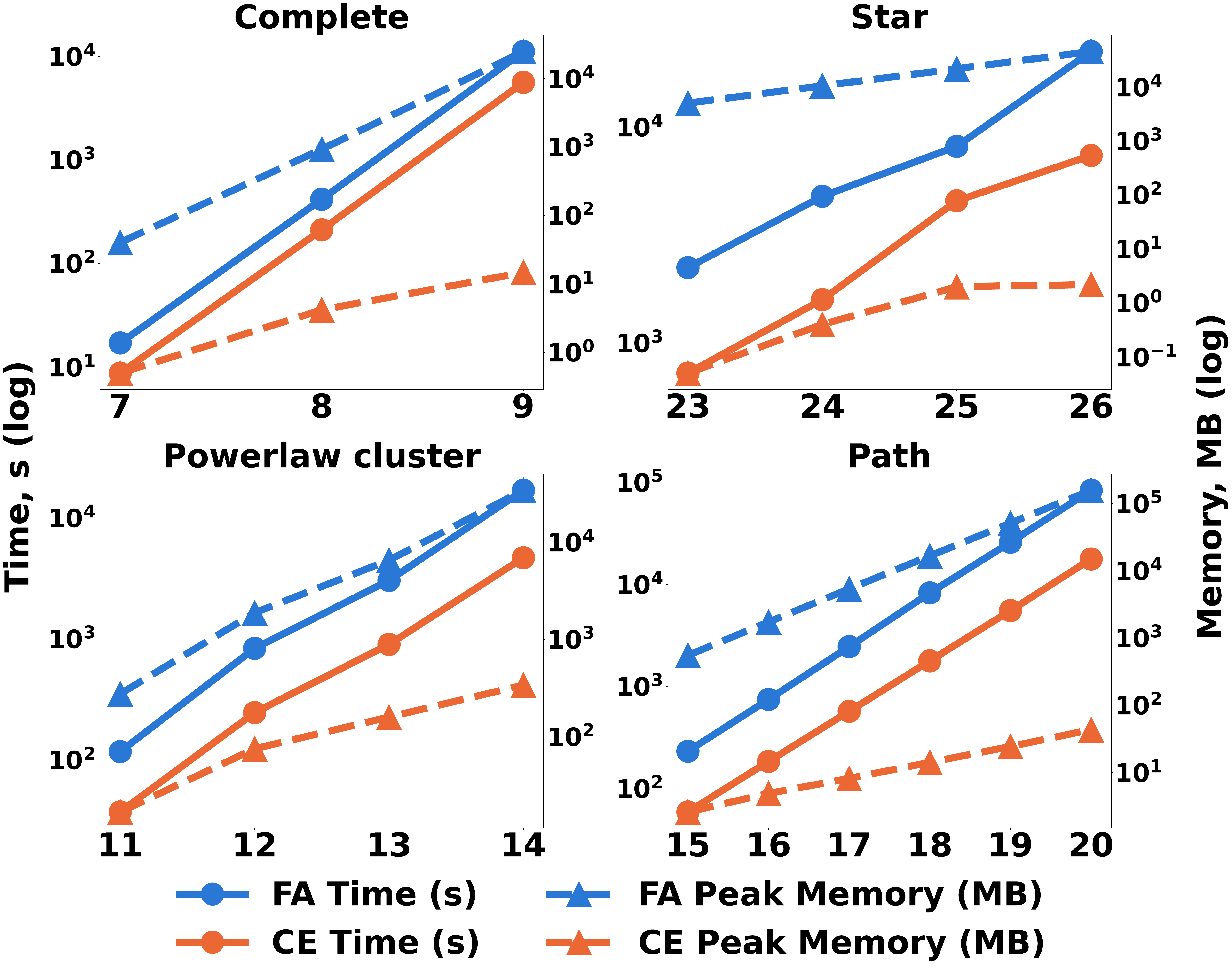}
         \caption{\small Comparison of total time (seconds) and peak memory (MB) between FA and CE for rank (Core phase from definition~\ref{def:core-phase}) computation, shown across four graph families.}
        \label{fig:placeholder2}
    \end{subfigure}
    \caption{\small Total time and peak memory for computing rank via exhaustive FA versus the CE-based method on the $(\Delta+1)$-GC program, across graph families and sizes.}
    \label{fig:ce-results}
\end{figure}

\input{cache_hit_miss_table}

Figure~\ref{fig:ce-results} compares the performance of exhaustive FA and CE-based FA for the GC program across different graph types and sizes. CE-based FA proves more efficient than exhaustive FA in both time and memory, with its memory advantage being most pronounced during the core phase (Definition~\ref{def:core-phase}). As graph size increases, the memory gap between exhaustive FA and CE-based FA widens superlinearly rather than growing linearly. This trend is supported by the cache hit rate, which likewise increases superlinearly with graph size.

Among the four graph types studied, powerlaw cluster graphs incurred the most cache misses, despite having a substantially smaller state space than the largest path and star instances tested (Table~\ref{tab:ce-cache-hits-misses}). Star graphs, by contrast, incurred the fewest misses in absolute and relative terms: misses remained pinned at essentially the node count even as the number of configurations grew substantially larger, yielding a cache hit rate that rounds to 100.00\% for the larger instances tested. This is consistent with the underlying graph structure: powerlaw cluster graphs, being highly heterogeneous, produce a greater number of structurally distinct subgraphs, whereas star graphs, owing to their highly symmetric structure, produce comparatively few distinct subgraphs~\cite{zenil2014correlation}.

Since rank effect calculation proceeds identically under FA and CE as described in Section~\ref{sec:spa-fa}, we omit it from our \cvf{} analysis.

\subsection{Ablation Study of ML-based method on MIS}
\label{sec:ablation-ml-mis}

\begin{table}[htbp]
\centering
\small
\caption{\small Group ablation on the 8-feature set. $n$ is the number of features remaining after removal.}
\label{tab:group-ablation}
\begin{tabular*}{\linewidth}{@{\extracolsep{\fill}} p{4.5cm}ccc}
\toprule
\textbf{Configuration} & $n$ & \textbf{ID MAE} & \textbf{OOD MAE} \\
\midrule
\textbf{Full model (8 features)} & \textbf{8} & \textbf{0.0756} & \textbf{0.2875} \\
\midrule
w/o \textbf{node group}\newline is\_in, is\_eligible, can\_request, can\_join, can\_leave & 3 & 0.1656 & 0.6567 \\
\addlinespace
w/o \textbf{configuration group}\newline multi\_gh\_max\_norm, multi\_gh\_mean\_norm & 6 & 0.1010 & 0.4322 \\
\addlinespace
w/o \textbf{eligible\_frac} & 7 & 0.0922 & 0.2920 \\
\bottomrule
\end{tabular*}
\end{table}

\textbf{Feature Importance.} Table~\ref{tab:group-ablation} presents an ablation study of the features used to train the $(MIS, G)$ ML model. Features are grouped from granular to coarse: node features (per-node, program \emph{MIS} related: \emph{is\_in}, \emph{can\_request}, \emph{can\_join}, \emph{can\_leave}, plus the generic feature \emph{is\_eligible}), configuration features (per-configuration, constant across nodes: \emph{multi\_gh\_max\_norm}, \emph{multi\_gh\_mean\_norm}, \emph{eligible\_frac}), and graph features (global, fixed across all configurations: the adjacency matrix of $G$).

Each node feature marks a specific fact about that node's current role: \emph{is\_in} indicates current set membership, \emph{is\_eligible} indicates whether any MIS rule is enabled, and \emph{can\_request}, \emph{can\_join}, and \emph{can\_leave} are the enabled-action bits for requesting membership (OUT $\to$ WAIT, no IN neighbor), joining the set (WAIT $\to$ IN, no IN neighbor and no lower-id WAIT neighbor), and leaving the set to resolve an ID-based conflict (IN $\to$ OUT, a lower-id IN neighbor exists). At the configuration level, \emph{eligible\_frac} gives the fraction of nodes with any enabled rule, zero only at the invariant.

\begin{table}[htbp]
\centering
\small
\caption{\small Ablation over GNN layer type, depth, hidden size, and pooling.}
\label{tab:arch-ablation}
\begin{tabular*}{\linewidth}{@{\extracolsep{\fill}} llcc}
\toprule
\textbf{Component} & \textbf{Setting} & \textbf{ID MAE} & \textbf{OOD MAE} \\
\midrule
\multirow{2}{*}{GNN layer}
 & GIN & \textbf{0.0756} & \textbf{0.2875} \\
 & GCN & 0.0785 & 0.3424 \\
\midrule
\multirow{3}{*}{Depth ($L$)}
 & 2 & 0.0733 & \textbf{0.2797} \\
 & 3 & \textbf{0.0673} & 0.3186 \\
 & 4 & 0.0745 & 0.3173 \\
\midrule
\multirow{3}{*}{Hidden dim}
 & 64  & 0.0791 & \textbf{0.3125} \\
 & 96  & \textbf{0.0673} & 0.3186 \\
 & 128 & 0.0738 & 0.3278 \\
\midrule
\multirow{3}{*}{Pool}
 & mean & \textbf{0.0673} & \textbf{0.3186} \\
 & max  & 0.0829 & 0.3378 \\
 & add  & 0.1060 & 0.4257 \\
\bottomrule
\end{tabular*}
\end{table}

\textbf{Model Architecture Ablation.} Table~\ref{tab:arch-ablation} presents the ablation of key components and hyperparameters, where depth ($L$) is the number of GNN layers. GIN outperforms GCN, consistent with its stronger expressivity under the WL test. Shallower depth and smaller hidden dimension generalize better OOD, suggesting reduced capacity aids robustness to distribution shift. Pooling has the largest effect: mean pooling's OOD advantage is consistent with add pooling's sensitivity to graph-size shift between splits.

\subsection{Evaluation of ML-based analysis on MIS}

Figure~\ref{fig:ml-rank-based-all} (rank-based, the \emph{when} aspect) and Figures~\ref{fig:ml-node-based-complete}, \ref{fig:ml-node-based-star}, \ref{fig:ml-node-based-powerlaw}, and \ref{fig:ml-node-based-path} (node-based, the \emph{where} aspect) present the ML-based \cvf{} analysis for the MIS program on four graph types: complete, star, powerlaw cluster, and path, each with $n=100$ nodes ($3^{100}$ configuration space for MIS). In Figure~\ref{fig:ml-rank-based-all}, it can be observed for all graph types that lower-rank configurations skew toward positive rank effects, while higher-rank configurations skew toward negative rank effects. This is consistent with rank being bounded below by the invariant (rank $= 0$): a low-rank configuration has comparatively little room to decrease further but substantial room to increase, so perturbations more often move it to a higher-rank configuration, while a high-rank configuration has the reverse asymmetry, with substantially more of the configuration space reachable at lower rank than higher, so perturbations more often decrease its rank. This indicates the ML model successfully learned the relative pattern of \cvf{} rank effects distinguishing higher-rank from lower-rank configurations.

In Figures~\ref{fig:ml-node-based-complete}, \ref{fig:ml-node-based-star}, \ref{fig:ml-node-based-powerlaw}, and \ref{fig:ml-node-based-path}, the ML model successfully learns the relative pattern of \cvf{} rank effects across nodes with differing centrality scores. This is most directly verifiable on structurally symmetric graphs: the complete graph is vertex-transitive, meaning every node is structurally equivalent to every other node, so all four centrality measures agree across nodes, and the model's predicted rank effect distribution is correspondingly uniform. The star graph, by contrast, has two structurally distinct groups of nodes, the hub and the leaves, and the model's predictions correctly reflect this two-group structure rather than treating all nodes as equivalent. The path graph offers an intermediate case: degree centrality varies only slightly between endpoint and internal nodes, and the model's predicted rank effects vary correspondingly little across the graph. For powerlaw cluster graphs, high- and low-valued nodes show clear differences in predicted rank effect under degree, closeness, and betweenness, matching the underlying structural differences in those measures; clustering coefficient is the exception, where high- and low-valued nodes show similar predicted effects despite structural differences in clustering, a discrepancy left for further investigation.

In both analyses, the ML model produced logically sound observations when comparing rank effects between the high- and low-valued extremes of a given attribute. However, it remains unclear whether the model's \emph{absolute} predicted rank effects are also accurate, since ground-truth labels are computationally intractable for a state space of $3^{100}$; verifying absolute accuracy at this scale is therefore out of scope for this paper. It can nonetheless be observed from the plots that predicted rank effects scale with graph size, even though the training graphs had substantially smaller rank effects, supporting our use of \emph{multi greedy height} (Section~\ref{sec:multi-greedy-height}) and \emph{scaled residual} (Section~\ref{sec:scaled-residual}).

\begin{figure}[htbp]
    \centering
    \includegraphics[width=\linewidth]{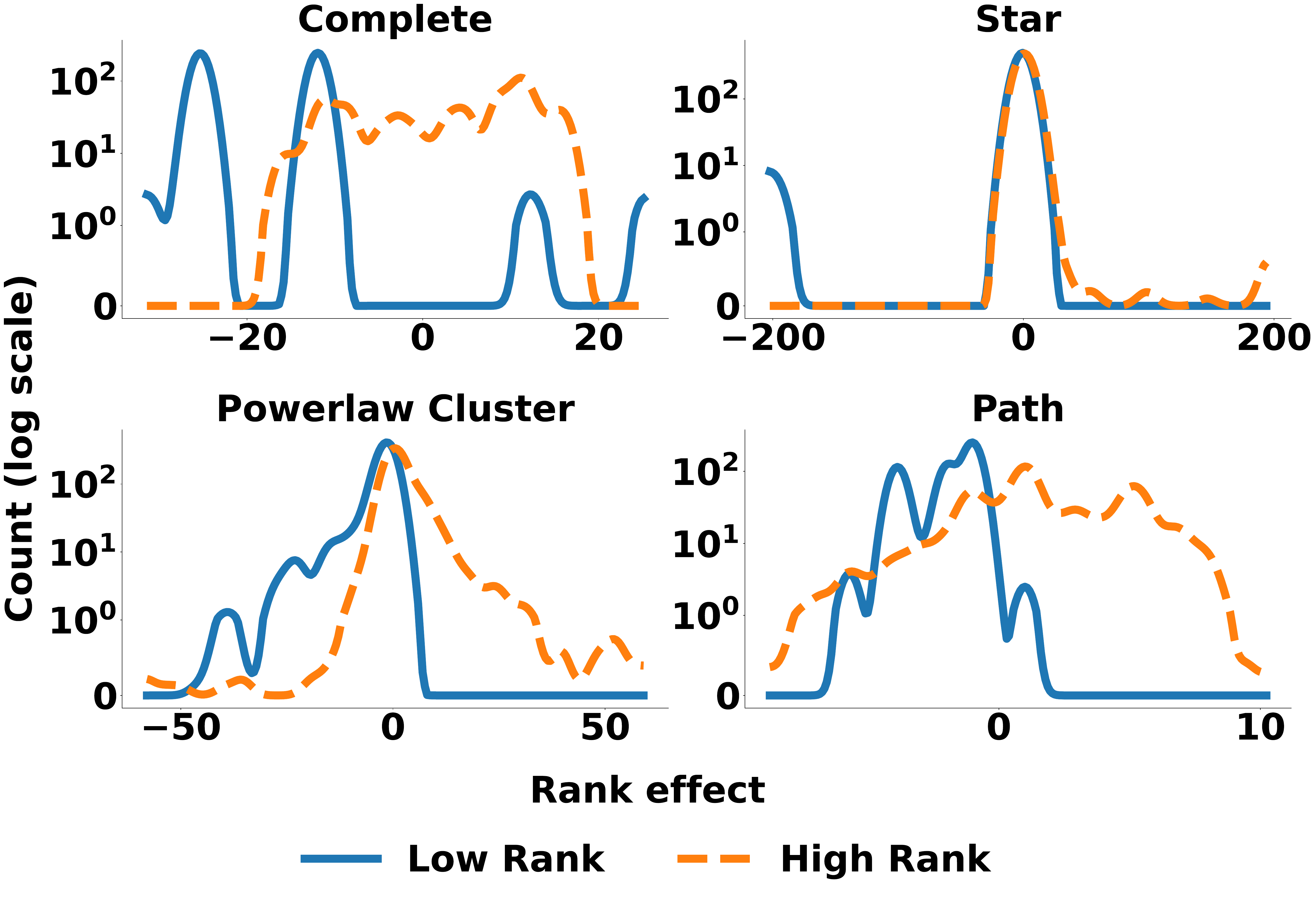}
    \caption{\small \Cvf{} analysis of configurations across different graph types in $(MIS, G_{N100})$ ($3^{100}$ configuration space), sampling 100 initial configurations $\times$ 50 perturbations each, per group (high/low-rank).}
    \label{fig:ml-rank-based-all}
\end{figure}

\begin{figure}[htbp]
    \centering
    \includegraphics[width=\linewidth]{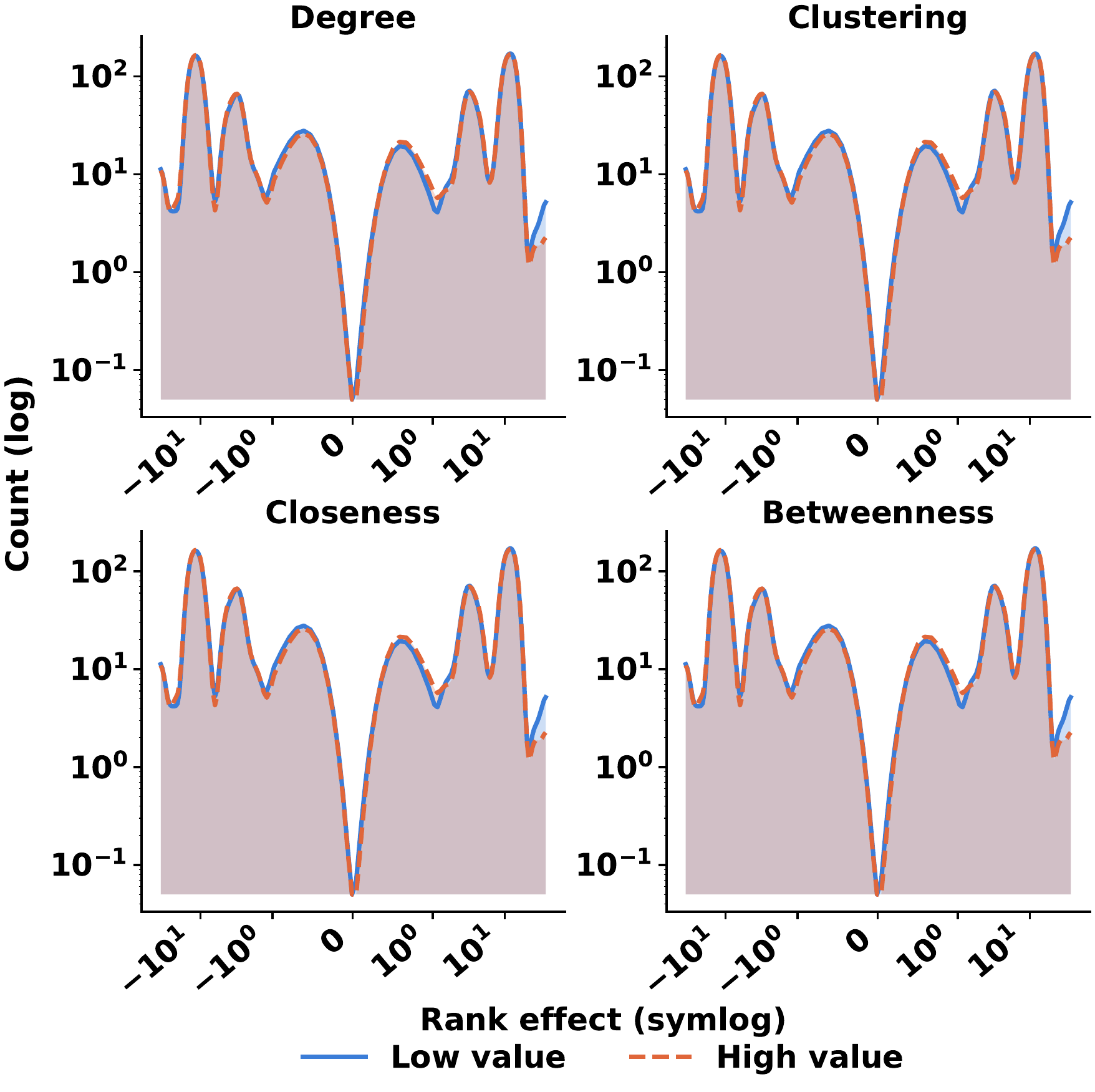}
    \caption{\small ML-based \cvf{} analysis for high- and low-value configurations across different centrality scores in \emph{(MIS, K\_N100)} ($3^{100}$ state space). Sampling 100 initial configs × 50 perturbations each, per group (high/low-value).}
    \label{fig:ml-node-based-complete}
\end{figure}

\begin{figure}[htbp]
    \centering
    \includegraphics[width=\linewidth]{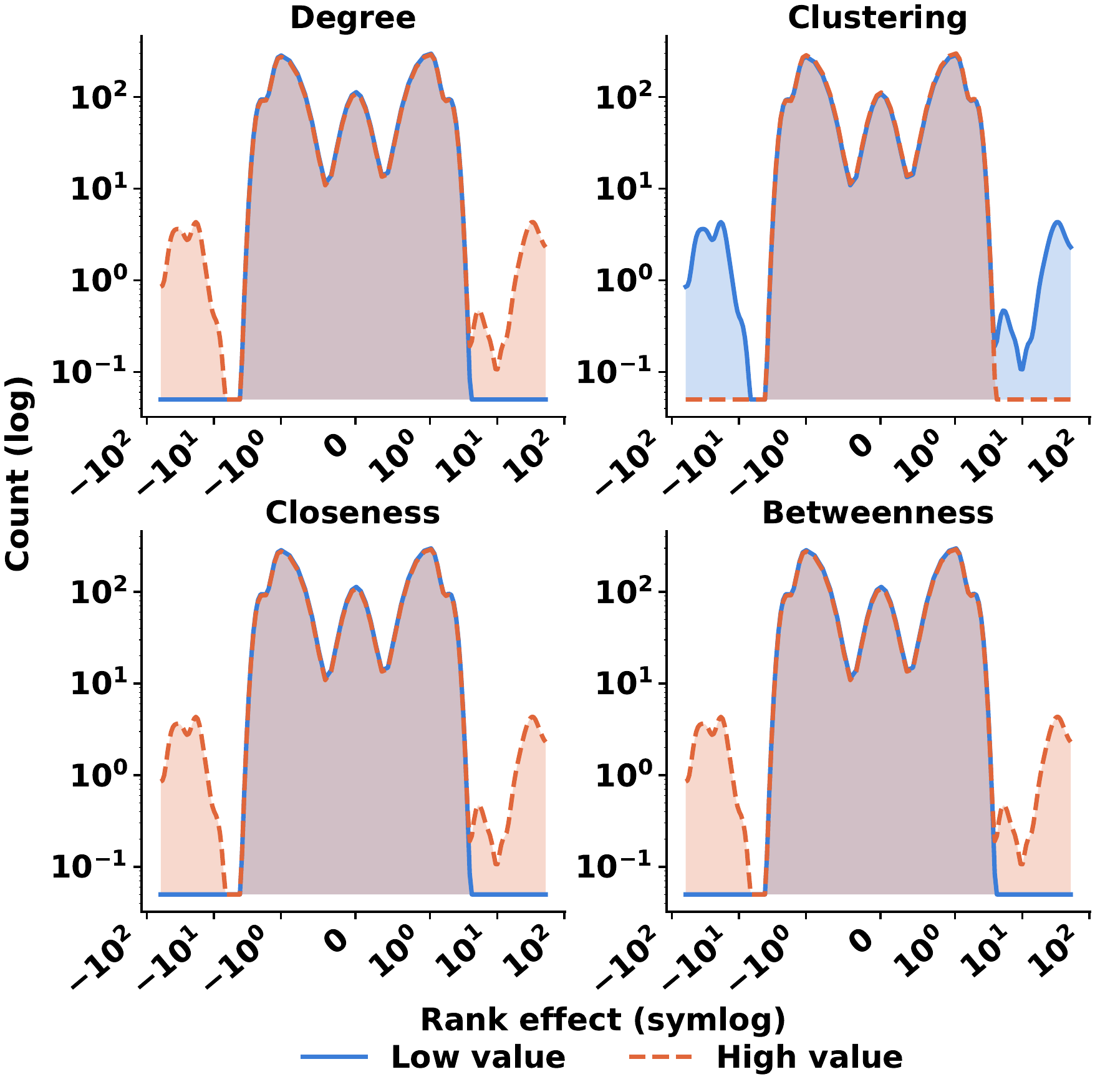}
    \caption{\small ML-based \cvf{} analysis for high- and low-value configurations across different centrality scores in \emph{(MIS, St\_N100)} ($3^{100}$ state space). Sampling 100 initial configs × 50 perturbations each, per group (high/low-value).}
    \label{fig:ml-node-based-star}
\end{figure}

\begin{figure}[htbp]
    \centering
    \includegraphics[width=\linewidth]{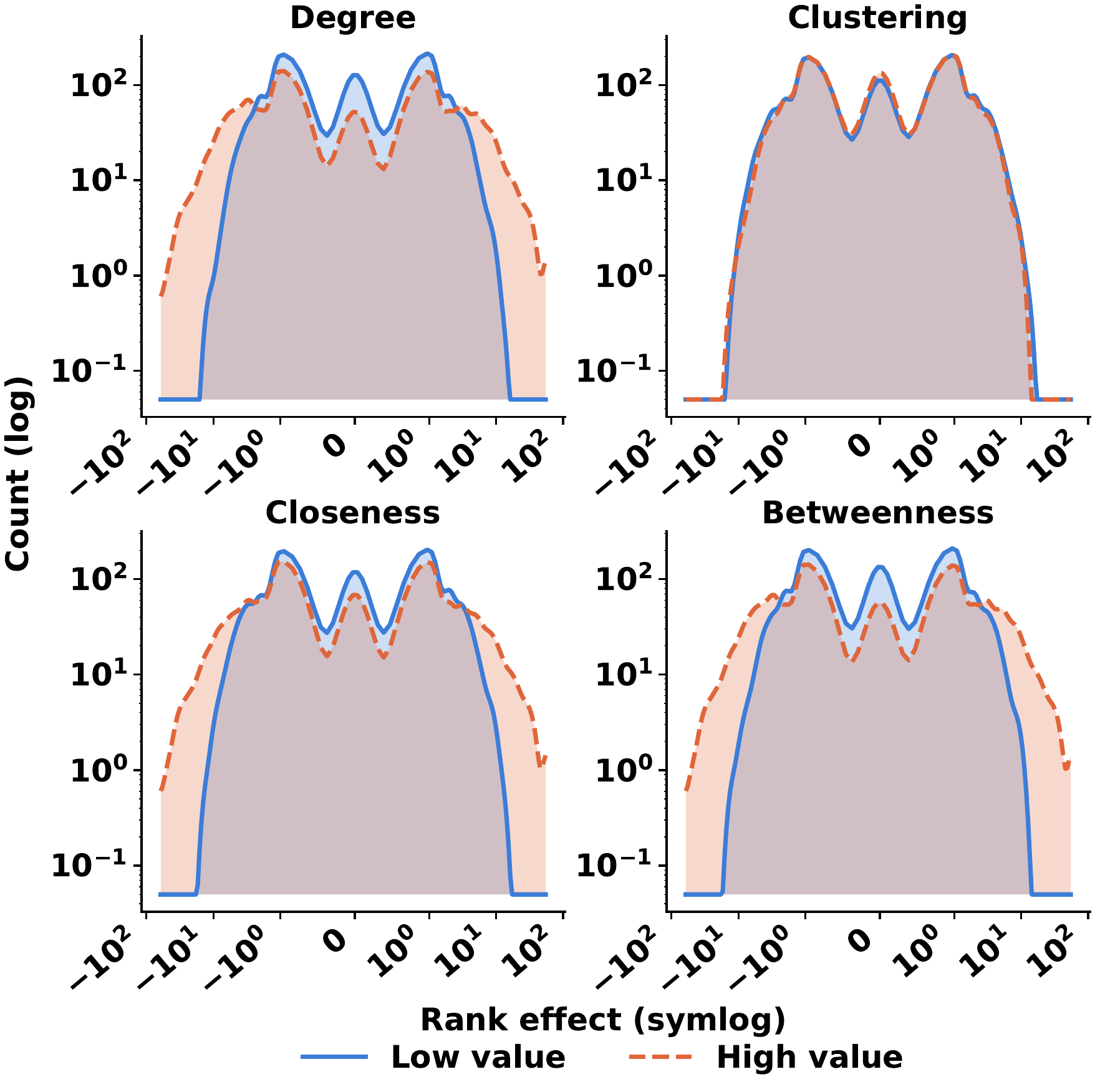}
    \caption{\small ML-based \cvf{} analysis for high- and low-value configurations across different centrality scores in \emph{(MIS, Pl\_N100)} ($3^{100}$ state space). Sampling 100 initial configs × 50 perturbations each (on the targeted node), per group (high/low-value).}
    \label{fig:ml-node-based-powerlaw}
\end{figure}

\begin{figure}[htbp]
    \centering
    \includegraphics[width=\linewidth]{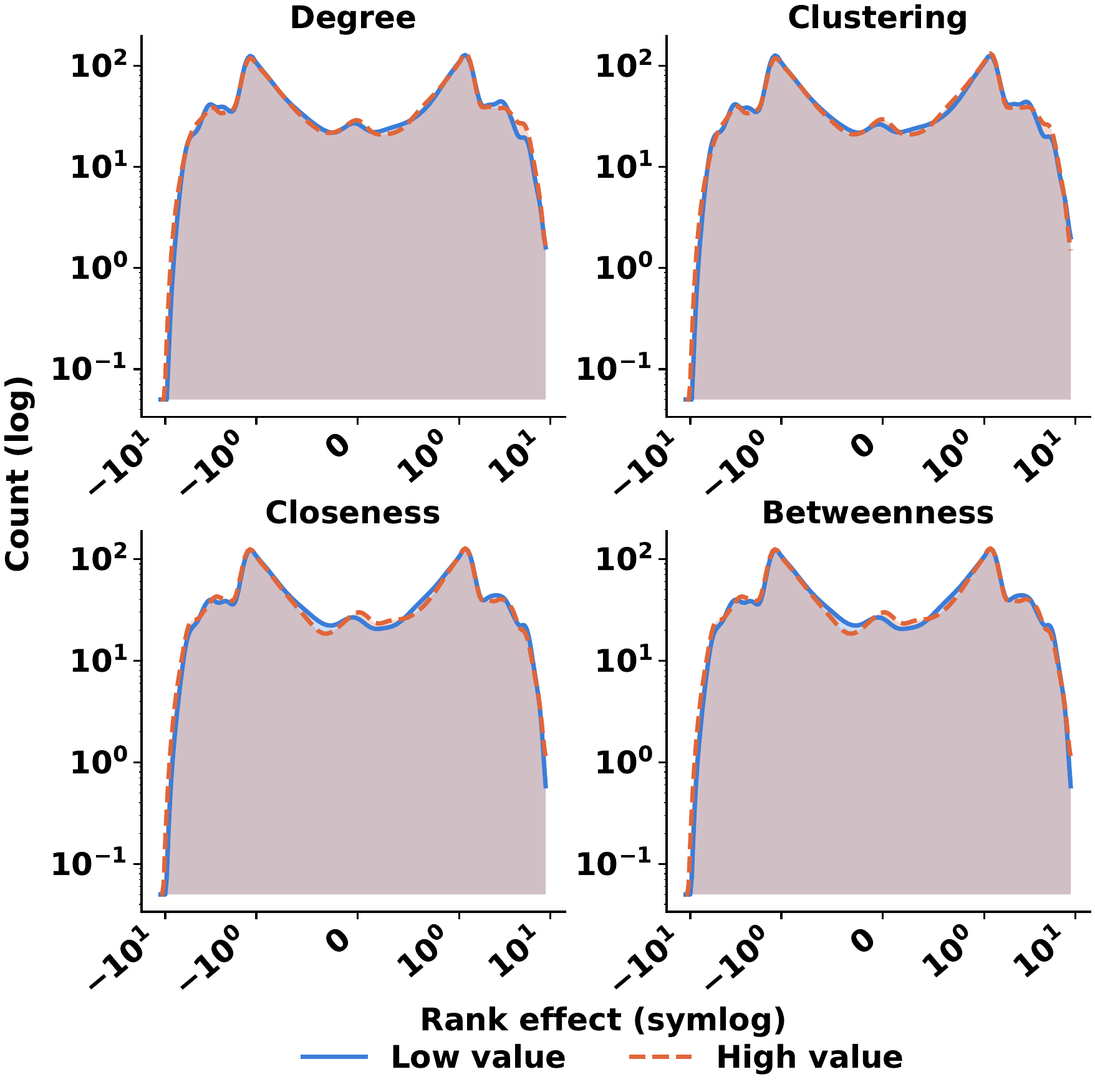}
    \caption{\small ML-based \cvf{} analysis for high- and low-value configurations across different centrality scores in \emph{(MIS, Pa\_N100)} ($3^{100}$ state space). Sampling 100 initial configs × 50 perturbations each (on the targeted node), per group (high/low-value).}
    \label{fig:ml-node-based-path}
\end{figure}


%% file: cache_hit_miss_table.tex
\begin{table}[htbp]
\centering
\footnotesize
\renewcommand{\arraystretch}{0.9}
\setlength{\tabcolsep}{4pt}
\caption{\small Cache hits and misses of edge states when employing the CE-based method for rank computation.}
\begin{tabular}{lcrrrr}
\hline
\textbf{Family} & \textbf{Graph} & \textbf{Configs} & \textbf{Cache Hits} & \textbf{Cache Misses} & \textbf{Hit Rate ($\approx$\%)} \\
\hline
K & n7  & 823543     & 824778     & 221   & 99.97 \\
K & n8  & 16777216   & 16783456   & 936   & 99.99 \\
K & n9  & 387420489  & 387454129  & 4361  & 100.00 \\
\hline
St & n23 & 96468992   & 96469244   & 23 & 100.00 \\
St & n24 & 201326592  & 201326867  & 24 & 100.00 \\
St & n25 & 419430400  & 419430699  & 25 & 100.00 \\
St & n26 & 872415232  & 872415556  & 26 & 100.00 \\
\hline
Pl & n11 & 4572288   & 4625212   & 6932   & 99.85 \\
Pl & n12 & 27216000  & 27460998  & 28593  & 99.90 \\
Pl & n13 & 97977600  & 98511962  & 57334  & 99.94 \\
Pl & n14 & 466560000 & 467705784 & 116214 & 99.98 \\
\hline
Pa & n15 & 6377292    & 6389514    & 987   & 99.99 \\
Pa & n16 & 19131876   & 19153248   & 1597  & 99.99 \\
Pa & n17 & 57395628   & 57432792   & 2584  & 100.00 \\
Pa & n18 & 172186884  & 172251197  & 4181  & 100.00 \\
Pa & n19 & 516560652  & 516671477  & 6765  & 100.00 \\
Pa & n20 & 1549681956 & 1549872220 & 10946 & 100.00 \\
\hline
\end{tabular}
\label{tab:ce-cache-hits-misses}
\end{table}

%% file: related-work.tex
\Cvf{}s have been studied in~\cite{Nguyen2019ICDCN,NguyenGuptaKulkarni23ICDCNAnalyzing} using exhaustive state space exploration and sampling methods. However, neither study provides analysis on large graphs or a scalable method for handling them.

Prior work has explored a range of techniques for improving OOD generalization in GNNs, including feature decorrelation~\cite{li2021oodgnnoutofdistributiongeneralizedgraph}, sparsity regularization~\cite{nerem2025gnn}, causal invariant learning~\cite{chen2022causally, wu2024causal}, data augmentation and graph structure extrapolation~\cite{li2023graphstructure}, and explicit size-generalization regularization~\cite{buffelli2022sizeshiftreg}. In our setting, we instead exploit multi-greedy height and scaled residual connections to achieve OOD generalization.

Model checking has been applied to verify self-stabilizing algorithms, using techniques such as symbolic state representation with OBDDs~\cite{tsuchiya2001symbolic} and weak-stabilization relaxations to improve scalability~\cite{jaubert2012scalable}. These approaches reduce the cost of exhaustive verification through symbolic encoding or relaxed fairness assumptions, but do not target \cvf{} analysis specifically, nor do they exploit the structural properties of the underlying graph to guide exploration. We instead propose a conflicting edges based state space exploration method that models the behavior of monotonically stabilizing programs on graphs efficiently by directly exploiting the structural properties of the underlying graph to guide exploration.

%% file: conclusion.tex
Distributed systems are inherently susceptible to a range of faults because they comprise multiple independently updated, replicated components rather than a single unified state. One such fault is \cvf{}, which occurs when a node acts on stale information about its neighbors. Self-stabilizing programs are tolerant of such faults within certain bounds; however, analyzing self-stabilizing graph programs is inherently non-trivial due to the size of the configuration space, which grows intractable even for small graphs (around 10 nodes). In this paper, we propose two methods for \cvf{} analysis that avoid the state-space explosion inherent to exhaustive exploration: a conflicting-edge (CE) based analysis that exploits the monotonic stabilization property, studied over the $(\Delta+1)$-Graph Coloring program, and a machine-learning-based analysis, studied over the Maximal Independent Set (MIS) program. The designed ML model was able to capture, in a computationally tractable way, the relative pattern between rank-effect distributions of configurations at the extremes of several attributes, including rank and node centrality, even on out-of-distribution graphs.

For future work, one direction is to study the \emph{absolute} accuracy of predicted rank effects on OOD graphs, which would provide stronger guarantees for \cvf{} analysis beyond the relative patterns validated in this paper. A second direction is combining CE-based analysis with ML to compute ranks more efficiently for monotonically stabilizing programs: configurations could be cached by their edge state, so that a rank is fetched directly from a matching cached edge state rather than recomputed via ML inference, substantially reducing resource use at scale. A third direction is extending the ML-based approach to other self-stabilizing programs beyond MIS, reusing the multi greedy height (Section~\ref{sec:multi-greedy-height}) and scaled residual (Section~\ref{sec:scaled-residual}) techniques for OOD modeling.

%% file: appendixes.tex

\section{Additional Figures}
\begin{strip}
    \centering
    \includegraphics[width=\linewidth,height=0.85\textheight,keepaspectratio]{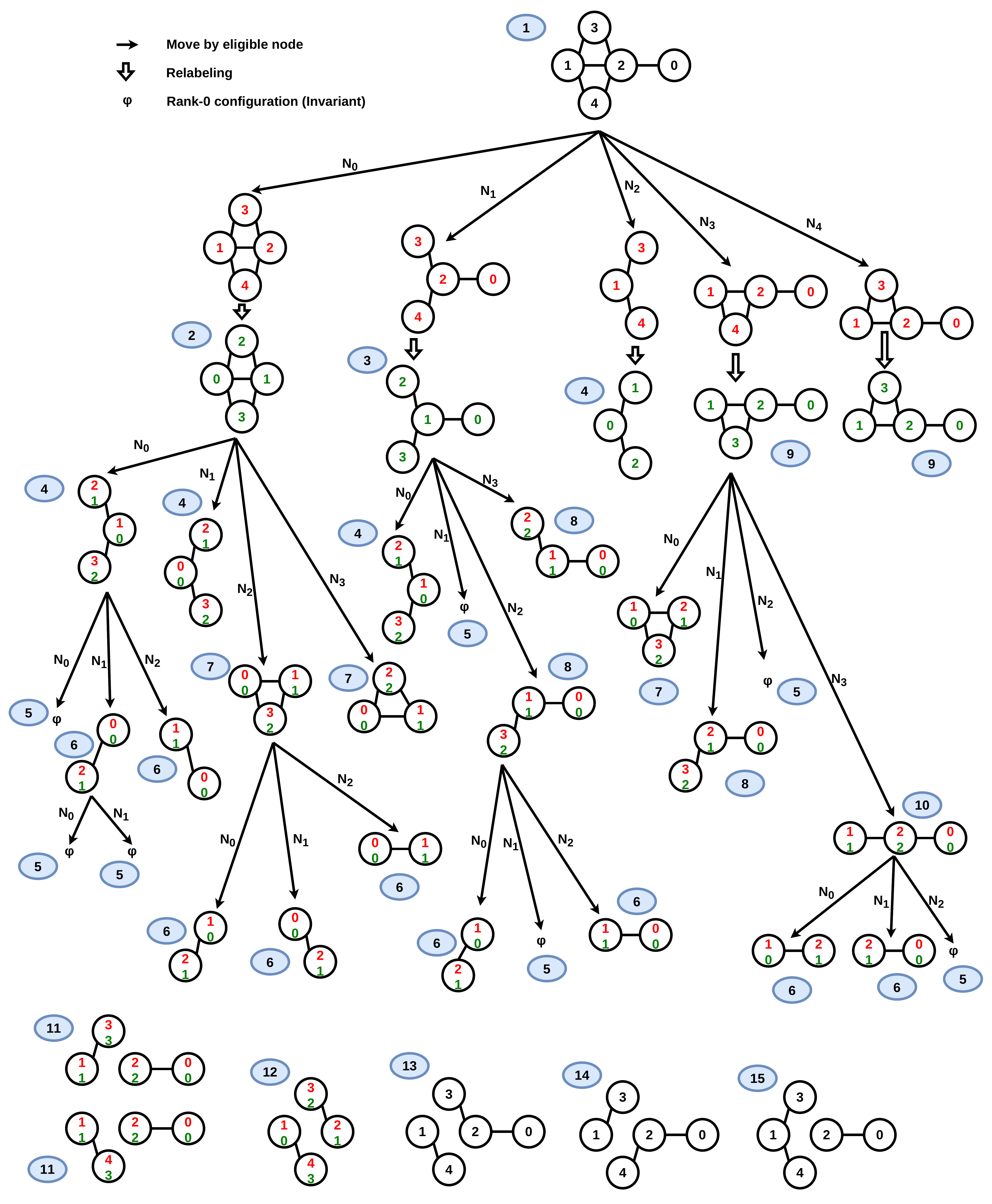}
    \captionof{figure}{\small Execution of the CE algorithm on a 5-node power-law cluster graph, where same-colored edges denote conflicts between connected nodes. Each layer branches on eligible node moves, removing the moved node and its incident edges, then relabels serially to the nearest available id (red: post-branching; green: post-relabeling). If relabeling produces a subgraph seen before, its rank is retrieved from cache instead of recomputed. Only $15$ cached subgraphs were needed to cover all $2 \times 4 \times 5 \times 3 \times 3 = 360$ states. Numbers in blue circles are unique subgraph ids explored.}
    \label{fig:ce-alg-execution}
\end{strip}

%% file: amit.bib
@inproceedings{NguyenGuptaKulkarni23ICDCNAnalyzing,
  title={Analyzing Program Transitions to Compute Benefit of Tolerating Consistency Violation Faults},
  author={Nguyen, Duong Ngoc and Gupta, Arya Tanmay and Kulkarni, Sandeep},
  booktitle={Proceedings of the 24th International Conference on Distributed Computing and Networking},
  pages={58--69},
  year={2023}
}

@article{Dijkstra1974CACM,
  title={Self-stabilizing systems in spite of distributed control},
  author={Dijkstra, Edsger W},
  journal={Communications of the ACM},
  volume={17},
  number={11},
  pages={643--644},
  year={1974},
  publisher={ACM New York, NY, USA}
}

@inproceedings{Nguyen2019ICDCN,
    author = {Nguyen, Duong and Kulkarni, Sandeep S. and Datta, Ajoy K.},
    title = {Benefit of self-stabilizing protocols in eventually consistent key-value stores: a case study},
    year = {2019},
    isbn = {9781450360944},
    publisher = {Association for Computing Machinery},
    address = {New York, NY, USA},
    url = {https://doi.org/10.1145/3288599.3288609},
    doi = {10.1145/3288599.3288609},
    booktitle = {Proceedings of the 20th International Conference on Distributed Computing and Networking},
    pages = {148–157},
    numpages = {10},
    location = {Bangalore, India},
    series = {ICDCN '19}
}

@misc{kipf2017semisupervisedclassificationgraphconvolutional,
      title={Semi-Supervised Classification with Graph Convolutional Networks}, 
      author={Thomas N. Kipf and Max Welling},
      year={2017},
      eprint={1609.02907},
      archivePrefix={arXiv},
      primaryClass={cs.LG},
      url={https://arxiv.org/abs/1609.02907}, 
}

@misc{hamilton2018inductiverepresentationlearninglarge,
      title={Inductive Representation Learning on Large Graphs}, 
      author={William L. Hamilton and Rex Ying and Jure Leskovec},
      year={2018},
      eprint={1706.02216},
      archivePrefix={arXiv},
      primaryClass={cs.SI},
      url={https://arxiv.org/abs/1706.02216}, 
}

@misc{veličković2018graphattentionnetworks,
    title={Graph Attention Networks}, 
    author={Petar Veličković and Guillem Cucurull and Arantxa Casanova and Adriana Romero and Pietro Liò and Yoshua Bengio},
    year={2018},
    eprint={1710.10903},
    archivePrefix={arXiv},
    primaryClass={stat.ML},
    url={https://arxiv.org/abs/1710.10903}, 
}

@inproceedings{DeCandiaHastorun07,
author = {DeCandia, Giuseppe and Hastorun, Deniz and Jampani, Madan and Kakulapati, Gunavardhan and Lakshman, Avinash and Pilchin, Alex and Sivasubramanian, Swaminathan and Vosshall, Peter and Vogels, Werner},
title = {Dynamo: amazon's highly available key-value store},
year = {2007},
isbn = {9781595935915},
publisher = {Association for Computing Machinery},
address = {New York, NY, USA},
url = {https://doi.org/10.1145/1294261.1294281},
doi = {10.1145/1294261.1294281},
booktitle = {Proceedings of Twenty-First ACM SIGOPS Symposium on Operating Systems Principles},
pages = {205–220},
numpages = {16},
location = {Stevenson, Washington, USA},
series = {SOSP '07}
}

@article{GilbertLynch02Brewer,
author = {Gilbert, Seth and Lynch, Nancy},
title = {Brewer's conjecture and the feasibility of consistent, available, partition-tolerant web services},
year = {2002},
issue_date = {June 2002},
publisher = {Association for Computing Machinery},
address = {New York, NY, USA},
volume = {33},
number = {2},
issn = {0163-5700},
url = {https://doi.org/10.1145/564585.564601},
doi = {10.1145/564585.564601},
journal = {SIGACT News},
month = jun,
pages = {51–59},
numpages = {9}
}

@article{spanner2013,
  title={Spanner: Google’s globally distributed database},
  author={Corbett, James C and Dean, Jeffrey and Epstein, Michael and Fikes, Andrew and Frost, Christopher and ...},
  journal={ACM Transactions on Computer Systems},
  volume={31},
  number={3},
  pages={8},
  year={2013},
  publisher={ACM}
}

@inproceedings{cassandra2009,
  title={Cassandra: A decentralized structured storage system},
  author={Lakshman, Avinash and Malik, Prashant},
  booktitle={Proceedings of the 2009 ACM SIGOPS/SOSP Workshop on Large‑Scale Distributed Systems and Middleware},
  year={2009}
}

@inproceedings{MalewiczMatthew10ICMD,
title	= {Pregel: a system for large-scale graph processing},author	= {Grzegorz Malewicz and Matthew H. Austern and Aart J.C Bik and James C. Dehnert and Ilan Horn and Naty Leiser and Grzegorz Czajkowski},year	= {2010},URL	= {http://doi.acm.org/10.1145/1807167.1807184},booktitle	= {Proceedings of the 2010 international conference on Management of data},pages	= {135--146},address	= {New York, NY, USA}}

@article{HEDETNIEMI2003251,
title = {Linear time self-stabilizing colorings},
journal = {Information Processing Letters},
volume = {87},
number = {5},
pages = {251-255},
year = {2003},
issn = {0020-0190},
doi = {https://doi.org/10.1016/S0020-0190(03)00299-0},
url = {https://www.sciencedirect.com/science/article/pii/S0020019003002990},
author = {Stephen T. Hedetniemi and David P. Jacobs and Pradip K. Srimani}
}

@article{TURAU200788,
title = {Linear self-stabilizing algorithms for the independent and dominating set problems using an unfair distributed scheduler},
journal = {Information Processing Letters},
volume = {103},
number = {3},
pages = {88-93},
year = {2007},
issn = {0020-0190},
doi = {https://doi.org/10.1016/j.ipl.2007.02.013},
url = {https://www.sciencedirect.com/science/article/pii/S0020019007000488},
author = {Volker Turau}
}


%% file: paper.bib
@article{faghih2018automated,
  title={Automated synthesis of distributed self-stabilizing protocols},
  author={Faghih, Fathiyeh and Bonakdarpour, Borzoo and Tixeuil, S{\'e}bastien and Kulkarni, Sandeep},
  journal={Logical Methods in Computer Science},
  volume={14},
  year={2018},
  publisher={Episciences. org}
}

@InProceedings{GosainSharma15StaticAnalysis,
author="Gosain, Anjana
and Sharma, Ganga",
editor="Mandal, Durbadal
and Kar, Rajib
and Das, Swagatam
and Panigrahi, Bijaya Ketan",
title="Static Analysis: A Survey of Techniques and Tools",
booktitle="Intelligent Computing and Applications",
year="2015",
publisher="Springer India",
address="New Delhi",
pages="581--591",
isbn="978-81-322-2268-2"
}

@article{zenil2014correlation,
  title={Correlation of automorphism group size and topological properties with program-size complexity evaluations of graphs and complex networks},
  author={Zenil, Hector and Soler-Toscano, Fernando and Dingle, Kamaludin and Louis, Ard A.},
  journal={Physica A: Statistical Mechanics and its Applications},
  volume={404},
  pages={341--358},
  year={2014},
  publisher={Elsevier}
}

@inproceedings{
    xu2018how,
    title={How Powerful are Graph Neural Networks?},
    author={Keyulu Xu and Weihua Hu and Jure Leskovec and Stefanie Jegelka},
    booktitle={International Conference on Learning Representations},
    year={2019},
    url={https://openreview.net/forum?id=ryGs6iA5Km},
}

@article{weisfeiler1968reduction,
  author  = {Weisfeiler, B. and Lehman, A. A.},
  title   = {A reduction of a graph to a canonical form and an algebra arising during this reduction},
  journal = {Nauchno-Technicheskaya Informatsia},
  series  = {2},
  number  = {9},
  year    = {1968}
}

@misc{pandurangan2018distributedcomplexitylargescalegraph,
      title={On the Distributed Complexity of Large-Scale Graph Computations}, 
      author={Gopal Pandurangan and Peter Robinson and Michele Scquizzato},
      year={2018},
      eprint={1602.08481},
      archivePrefix={arXiv},
      primaryClass={cs.DC},
      url={https://arxiv.org/abs/1602.08481}, 
}

@article{LiHudak89TOCS,
  author  = {Li, Kai and Hudak, Paul},
  title   = {Memory Coherence in Shared Virtual Memory Systems},
  journal = {ACM Transactions on Computer Systems},
  volume  = {7},
  number  = {4},
  pages   = {321--359},
  year    = {1989},
  doi     = {10.1145/75104.75105}
}

@incollection{valmari1998state,
  title     = {The State Explosion Problem},
  author    = {Valmari, Antti},
  booktitle = {Lectures on Petri Nets I: Basic Models},
  series    = {Lecture Notes in Computer Science},
  volume    = {1491},
  pages     = {429--528},
  publisher = {Springer},
  address   = {Berlin, Heidelberg},
  year      = {1998},
  doi       = {10.1007/3-540-65306-6_21}
}

@incollection{clarke2012model,
  title     = {Model Checking and the State Explosion Problem},
  author    = {Clarke, Edmund M. and Klieber, William and Nov{\'a}{\v{c}}ek, Milo{\v{s}} and Zuliani, Paolo},
  booktitle = {Tools for Practical Software Verification},
  series    = {Lecture Notes in Computer Science},
  volume    = {7682},
  publisher = {Springer},
  address   = {Berlin, Heidelberg},
  year      = {2012},
  doi       = {10.1007/978-3-642-35746-6_1}
}

@incollection{pelanek2009fighting,
  title     = {Fighting State Space Explosion: Review and Evaluation},
  author    = {Pel{\'a}nek, Radek},
  booktitle = {Formal Methods for Industrial Critical Systems (FMICS 2008)},
  series    = {Lecture Notes in Computer Science},
  volume    = {5596},
  pages     = {37--52},
  publisher = {Springer},
  address   = {Berlin, Heidelberg},
  year      = {2009},
  doi       = {10.1007/978-3-642-03240-0_7}
}

@misc{li2021oodgnnoutofdistributiongeneralizedgraph,
      title={OOD-GNN: Out-of-Distribution Generalized Graph Neural Network}, 
      author={Haoyang Li and Xin Wang and Ziwei Zhang and Wenwu Zhu},
      year={2021},
      eprint={2112.03806},
      archivePrefix={arXiv},
      primaryClass={cs.LG},
      url={https://arxiv.org/abs/2112.03806}, 
}

@inproceedings{nerem2025gnn,
  title     = {Graph Neural Networks Extrapolate Out-of-distribution for Shortest Paths},
  author    = {Nerem, Robert R. and others},
  booktitle = {Proceedings of Machine Learning Research},
  volume    = {336},
  year      = {2026},
  url       = {https://openreview.net/forum?id=zAdC99LPZU}
}

@article{chen2022causally,
  title   = {Learning Causally Invariant Representations for Out-of-Distribution Generalization on Graphs},
  author  = {Chen, Yongqiang and Zhang, Yonggang and Bian, Yatao and Yang, Han and Ma, Kaili and Xie, Binghui and Liu, Tongliang and Han, Bo and Cheng, James},
  journal = {Advances in Neural Information Processing Systems},
  volume  = {35},
  year    = {2022}
}

@inproceedings{wu2024causal,
  title     = {Graph Out-of-Distribution Generalization via Causal Intervention},
  author    = {Wu, Qitian and Nie, Fan and Yang, Chenxiao and Bao, Tianyi and Yan, Junchi},
  booktitle = {Proceedings of the ACM Web Conference (WWW)},
  year      = {2024}
}

@article{li2023graphstructure,
  title   = {Graph Structure and Feature Extrapolation for Out-of-Distribution Generalization},
  author  = {Li, Xiner and Gui, Shurui and Luo, Youzhi and Ji, Shuiwang},
  journal = {arXiv preprint arXiv:2306.08076},
  year    = {2023}
}

@inproceedings{buffelli2022sizeshiftreg,
  title     = {SizeShiftReg: A Regularization Method for Improving Size-Generalization in Graph Neural Networks},
  author    = {Buffelli, Davide and Li{\`o}, Pietro and Vandin, Fabio},
  booktitle = {Advances in Neural Information Processing Systems (NeurIPS)},
  year      = {2022}
}

@inproceedings{yamauchi2010monotonic,
  title     = {Monotonic Stabilization},
  author    = {Yamauchi, Yukiko and Tixeuil, S{\'e}bastien},
  booktitle = {Principles of Distributed Systems (OPODIS 2010)},
  series    = {Lecture Notes in Computer Science},
  volume    = {6490},
  pages     = {475--490},
  publisher = {Springer},
  address   = {Berlin, Heidelberg},
  year      = {2010},
  doi       = {10.1007/978-3-642-17653-1_34}
}

@article{tsuchiya2001symbolic,
  title   = {Symbolic Model Checking for Self-Stabilizing Algorithms},
  author  = {Tsuchiya, Tatsuhiro and Nagano, Shin and Paidi, Rajit B. and Kikuno, Tohru},
  journal = {IEEE Transactions on Parallel and Distributed Systems},
  volume  = {12},
  number  = {1},
  pages   = {81--95},
  year    = {2001},
  doi     = {10.1109/71.899941}
}

@article{jaubert2012scalable,
  title   = {Towards Scalable Model Checking of Self-Stabilizing Programs},
  journal = {Journal of Parallel and Distributed Computing},
  year    = {2012},
  doi     = {10.1016/j.jpdc.2012.11.008}
}

@article{cai1992optimal,
  author  = {Cai, Jin-Yi and F{\"u}rer, Martin and Immerman, Neil},
  title   = {An Optimal Lower Bound on the Number of Variables for
             Graph Identification},
  journal = {Combinatorica},
  volume  = {12},
  number  = {4},
  pages   = {389--410},
  year    = {1992}
}

@article{mckay2014practical,
  author  = {McKay, Brendan D. and Piperno, Adolfo},
  title   = {Practical Graph Isomorphism, {II}},
  journal = {Journal of Symbolic Computation},
  volume  = {60},
  pages   = {94--112},
  year    = {2014}
}
